\tikzset{>=stealth}
\tikzstyle{node} = [circle, minimum size = 1.4mm, inner sep = 0mm, color=black, fill]
\tikzstyle{hyperedge} = [rectangle, minimum width = 5mm, minimum height = 5mm, draw, inner sep = 0mm]
\newcommand{\eqdef}{\mathrel{\mathop:}=}
\newcommand{\LC}{\mathrm{L}}
\newcommand{\LP}{\mathrm{LP}}
\newcommand{\LPG}{\mathrm{LPG}}
\newcommand{\lBAM}{\mathrm{lBAM}}
\newcommand{\Nat}{\mathbb{N}}
\theoremstyle{thmstyleone}%
\newtheorem{theorem}{Theorem}
\newtheorem{lemma}{Lemma}
\newtheorem{corollary}{Corollary}
\theoremstyle{thmstyletwo}%
\newtheorem{remark}{Remark}%
\theoremstyle{thmstylethree}%
\newtheorem{definition}{Definition}%
\newtheorem{construction}{Construction}%
\begin{document}

\title[Grammars over LP: Recognizing Power and Connection to BVASS]{Grammars over the Lambek Calculus with Permutation: Recognizing Power and Connection to Branching Vector Addition Systems with States}


\author*{\fnm{Tikhon} \sur{Pshenitsyn}}
\email{ptihon@yandex.ru}

\affil*{\orgdiv{Department of Mathematical Logic and Theory of Algorithms}, \orgname{Lomonosov Moscow State University}, \orgaddress{\street{GSP-1, Leninskie Gory}, \city{Moscow}, \postcode{119991}, \country{Russia}}}


\abstract{
In (Van Benthem, 1991) it is proved that all permutation closures of context-free languages can be generated by grammars over the Lambek calculus with the permutation rule ($\LP$-grammars); however, to our best knowledge, it is not established whether the converse holds or not. In this paper, we show that $\LP$-grammars are equivalent to linearly-restricted branching vector addition systems with states and with additional memory (shortly, lBVASSAM), which are modified branching vector addition systems with states. Then an example of such an lBVASSAM is presented, which generates a non-semilinear set of vectors; this yields that $\LP$-grammars generate more than permutation closures of context-free languages. Moreover, equivalence of $\LP$-grammars and lBVASSAM allows us to present a normal form for $\LP$-grammars and, as a consequence, prove that $\LP$-grammars are equivalent to $\LP$-grammars without product. Finally, we prove that the class of languages generated by $\LP$-grammars is closed under intersection.
}

\keywords{Lambek calculus, LP, categorial grammar, formal language, vector addition system, branching vector addition system with states}



\maketitle

\section{Introduction}\label{sec_introduction}

In the formal grammar theory, there are two families of approaches, which are, in some sense, opposed to each other: generative grammars and categorial grammars. Generative grammars are rule-based: a string belongs to a language generated by a grammar if and only if this string can be produced from a fixed start object of the grammar using rewriting rules specified in it. Categorial grammars work in an opposite way in the sense that a grammar does not produce a string step-by-step but rather takes a whole string in the first place and \textit{proves} that it belongs to its language. Hence to define any particular kind of categorial grammars we must specify what proof mechanism is used in its core.

One of prominent kinds of categorial grammars is Lambek categorial grammars. They are based on the Lambek calculus $\LC$, which is a logic designed to model syntax of natural languages \cite{Lambek58}. This is a substructural logic of the intuitionistic logic, namely, it is obtained from the latter by dropping structural rules of weakening, contraction, and permutation. In the Lambek calculus, types (i.e. formulas) are built from atomic ones using three operations: the left division $\backslash$, the right division $/$, and the product $\cdot$. Two divisions correspond to the implication in the intuitionistic logic; two variants of the implication arise since the order of types matters in the Lambek calculus. Following \cite{Lambek58} we consider \emph{sequents} as provable objects in $\LC$, which are structures of the form $A_1,\dotsc,A_n \to B$ where $n>0$ and $A_i,B$ are types. Finally, we define a Lambek grammar as a finite correspondence between symbols of an alphabet and types of $\LC$; besides, in a grammar, we must fix some \emph{distinguished type} $S$. Then a string $w = a_1\dotsc a_n$ belongs to the language generated by such a grammar if and only if we can replace each symbol $a_i$ by a type $T_i$ corresponding to it in such a way that the sequent $T_1,\dotsc,T_n \to S$ is derivable in $\LC$. 

One of the famous results proved by Pentus in \cite{Pentus93} is that Lambek grammars generate only context-free languages (the converse, i.e., that each context-free language without the empty word is generated by some Lambek grammar, was proved in \cite{Bar-Hillel60} in 1960). The proof uses several delicate tricks involving free-group interpretations and so-called binary-reduction lemma.

After the seminal work \cite{Lambek58} of Lambek, numerous modifications and extensions of $\LC$ have been introduced for different purposes. For each such modification one can define a corresponding class of categorial grammars in the same way as Lambek grammars are defined and then study what class of languages new grammars are able to generate. In particular, it is interesting to check if the ideas behind the theorem proved by Pentus in \cite{Pentus93} fit in a new class of categorial grammars. 

This work follows this agenda: we are going to investigate what languages can be generated by categorial grammars based on the Lambek calculus with the permutation rule (denoted as $\LP$). This calculus is obtained from $\LC$ by allowing one to freely change the order of types in left-hand sides of sequents. It is used and studied in, e.g., \cite{Benthem83,Benthem91}. In particular, in \cite{Benthem91}, it is proved that $\LP$-grammars generate all permutation closures of context-free languages; however, the converse statement was neither proved nor disproved in that paper. Stepan L. Kuznetsov \cite{Kuznetsov_personal} introduced this problem to me conjecturing that there is a counterexample to the converse statement. To my best knowledge, this question is still open; for instance, in \cite[p.~230]{Valentin12} the question of existence of a Pentus-like proof for $\LP$ is mentioned as an open one.

In this paper, we answer this question and show that $\LP$-grammars generate some languages that are not permutation closures of context-free languages (hence confirming Kuznetsov's conjecture). This is done by introducing an equivalent formalism called \emph{linearly-restricted branching vector addition systems with states and additional memory (lBVASSAM)}\footnote{We apologize to the reader for such long abbreviations.}. We prove that lBVASSAM generate exactly Parikh images of languages generated by $\LP$-grammars. The proof is inspired by the proof of the fact that double-pushout hypergraph grammars with a linear restriction on length of derivations can be embedded in hypergraph Lambek grammars presented in \cite{Pshenitsyn22}. In that work, we deal with a general formalism extending the Lambek calculus to hypergraph structures and investigate expressivity of the corresponding class of categorial grammars. Nicely, working with them on such a general level also provided us with methods applicable to $\LP$-grammars as well.

In this paper, we show how to transform an $\LP$-grammar into an equivalent lBVASSAM and vice versa. This also allows us to prove some nice facts about $\LP$-grammars: for example, we can prove that $\LP$-grammars are equivalent to $\LP(/)$-grammars, i.e. grammars based on the Lambek calculus with permutation and with division only. Another observation is that $\LP$-grammars are equivalent to $\LP$-grammars of order 2, i.e. to grammars with the maximal nesting depth of divisions being equal to 2.

This paper is organized as follows. In Section \ref{sec_preliminaries}, we introduce some preliminary notions and formaly define the Lambek calculus with permutation $\LP$ along with $\LP$-grammars. In Section \ref{sec_def_lbvassam}, we define linearly-restricted branching vector addition systems with states and additional memory. In Section \ref{sec_main_result}, we prove the equivalence theorem and some its corollaries. In Section \ref{sec_intersection}, we prove that languages generated by $\LP$-grammars are closed under finite intersection. In Section \ref{sec_conclusion}, we conclude. 

\section{Preliminaries}\label{sec_preliminaries}
Let us start with clarifying some notation used in the remainder of the paper. 

$\Sigma^\ast$ is the set of all strings over the alphabet $\Sigma$ (including the empty string $\varepsilon$). $\mathcal{M}(\Sigma)$ is the set of all finite nonempty multisets with elements from $\Sigma$. In this paper, we call a \emph{language} any subset of $\mathcal{M}(\Sigma)$. The length $|w|$ of a multiset $w$ is its cardinality; by $|w|_a$ we denote the number of occurrences of an element $a$ in $w$. The size $|v|$ of a vector $v = (v_1,\dotsc,v_k) \in \Nat^k$ equals $v_1+\dotsc+v_k$. By $e_i$ we denote the standard-basis vector $(0,\dotsc,0,1,0,\dotsc,0)$ where $1$ stands at the $i$-th position.

Hereinafter, given a multiset $\{a_1,\dotsc,a_n\}$, we often write $a_1,\dotsc,a_n$ instead, i.e. we omit braces. In the same spirit, when we write $\Gamma,\Delta$ for multisets $\Gamma$ and $\Delta$, this stands for their union $\Gamma \cup \Delta$.

If $\Sigma = \{a_1,\dotsc,a_k\}$ is a finite alphabet (with a fixed enumeration of symbols from $1$ up to $k$), then Parikh image of a multiset $w \in \mathcal{M}(\Sigma)$ is defined as $\pi(w) = (|w|_{a_1},\dotsc,|w|_{a_k})$. This definition is generalized to languages in an obvious way: $\pi(L) = \{\pi(w) \mid w \in L\}$ (for $L \subseteq \mathcal{M}(\Sigma)$). We can also consider the inverse Parikh image: $\pi^{-1}(V) = \{w \in \mathcal{M}(\Sigma) \mid \pi(w) \in V\}$. Clearly, it always holds that $\pi(\pi^{-1}(V)) = V$ and $\pi^{-1}(\pi(L))=L$.

\subsection{Lambek Calculus With Permutation}
In this section, we define the Lambek calculus with permutation $\LP$ in the Gentzen style. We fix a countable set of \emph{primitive types} $Pr$ and define the set of \emph{types} as follows: $Tp \eqdef Pr \mid Tp/Tp \mid Tp \cdot Tp$ (in contrast to the Lambek calculus without permutation, we do not need to introduce the left division $\backslash$ here). A \emph{sequent} is a structure of the form $A_1,\dotsc,A_n \to B$ where $n > 0$, and $A_i$, $B$ are types. The multiset $A_1,\dotsc,A_n$ is called an \emph{antecedent}, and $B$ is called a \emph{succedent}.

The only axiom of $\LP$ is $p \to p$ for $p \in Pr$. There are four inference rules:
$$
\infer[(/\to)]{\Gamma, B / A, \Pi \to C}{\Pi \to A & \Gamma, B \to C}
\qquad
\infer[(\to/)]{\Pi \to B / A}{\Pi, A \to B}
$$
$$
\infer[(\cdot\to)]{\Gamma, A \cdot B \to C}{\Gamma, A, B \to C}
\qquad
\infer[(\to\cdot)]{\Pi, \Psi \to A \cdot B}{\Pi \to A & \Psi \to B}
$$
Here small Latin letters $p,q,r,\dotsc$ represent primitive types; capital Latin letters $A,B,C,\dotsc$ represent types; capital Greek letters $\Gamma,\Delta,\dotsc$ represent finite multisets of types (besides, $\Pi,\Psi$ must be nonempty). By $\LP \vdash \Pi \to A$ we mean that the sequent $\Pi \to A$ is derivable in $\LP$. 

We call the type $B/A$ in the rules $(/\to)$ and $(\to/)$ \emph{major} as well as the type $A \cdot B$ in the rules $(\cdot\to)$ and $(\to\cdot)$.
\begin{definition}
	$A \times k$ is a shorthand notation for $\underbrace{A,\dotsc, A}_{k\mbox{ times}}$, and $A^k \eqdef \underbrace{A\cdot\dotsc\cdot A}_{k\mbox{ times}}$.
\end{definition}

The Lambek calculus with permutation can be restricted to the calculus without the product $\cdot$, i.e. we can consider a fragment of $\LP$ with division only. We will denote this fragment as $\LP(/)$.
\begin{definition}
	The \emph{length} of types and sequents is defined as follows:
	\begin{enumerate}
		\item $|p| = 1$;
		\item $|A \circ B| = |A|+|B|+1$, $\circ \in \{\cdot, /\}$;
		\item $|A_1,\dotsc,A_n \to B| = |A_1|+\dotsc+|A_n|+|B|$.
	\end{enumerate}
	
	The \emph{depth} of a type $A$ without products is defined as follows:
	\begin{enumerate}
		\item $d(p) = 0$, $p \in Pr$;
		\item $d(A/B) = \max\{d(A),d(B)+1\}$.
	\end{enumerate}
\end{definition}

The cut rule is admissible in $\LP$ (i.e. everything that can be derived using it can also be derived without it):
$$
\infer[(\mathrm{cut})]{\Gamma,\Pi \to B}{\Pi \to A & \Gamma,A \to B}
$$
This implies that the following rules are also admissible:
$$
\infer[(\cdot \to)^{-1}]{\Gamma,A,B \to C}{\Gamma,A\cdot B \to C}
\qquad
\infer[(\to /)^{-1}]{\Pi,A \to B}{\Pi \to B/A}
$$
Indeed, the first rule is in fact the application of the cut rule to the sequents $A,B \to A\cdot B$ and $\Gamma,A\cdot B \to C$; the second rule is the application of the cut rule to the sequents $\Pi \to B/A$ and $B/A,A \to B$. The above two rules are opposite to the rules $(\cdot \to)$ and $(\to /)$; in other words, we proved that the latter ones are invertible.

Now let us formulate the definition of $\LP$-grammars.
\begin{definition}\label{def_lp-grammar}
	An $\LP$-grammar is a tuple $G = \langle \Sigma,S,\triangleright \rangle$ where $\Sigma$ is a finite \emph{alphabet}, $S \in Tp$ is a \emph{distinguished type}, and $\triangleright \subseteq \Sigma \times Tp$ is a finite binary relation between symbols of the alphabet and types (in other words, one assigns several types to each element of $\Sigma$). 
	\\
	\emph{The language $L(G)$ generated by $G$} is the set of multisets $a_1,\dotsc,a_n \in \mathcal{M}(\Sigma)$ such that there exist types $T_1,\dotsc,T_n$ of $\LP$, for which it holds that:
	\begin{enumerate}
		\item $a_i \triangleright T_i$ ($i=1,\dotsc,n$);
		\item $\LP\vdash T_1,\dotsc, T_n \to S$.
	\end{enumerate}
\end{definition}

\begin{definition}
	We denote by $Tp(G)$ the set of all types invloved in $G$ (including $S$); more formally, $Tp(G) \eqdef \{T \mid \exists a: a \triangleright T\} \cup \{S\}$. Let us also inductively define the set $STp^+(G)$ of \emph{positive subtypes of $G$} and the set $STp^-(G)$ of \emph{negative subtypes of $G$} as follows:
	\begin{itemize}
		\item If there exists $a$ such that $a \triangleright T$, then $T \in STp^-(G)$;
		\item $S \in STp^+(G)$;
		\item If $A/B \in STp^\pm(G)$, then $A \in STp^\pm(G)$ and $B \in STp^\mp(G)$ (here $\pm$ and $\mp$ are either $+$ and $-$ resp. or $-$ and $+$ resp.);
		\item If $A\cdot B \in STp^\pm(G)$, then both $A$ and $B$ are in $STp^\pm(G)$. 
	\end{itemize}
	The set $STp(G)$ of all subtypes of $G$ is simply the union $STp^+(G) \cup STp^-(G)$. 
\end{definition}
It is clear that whenever we consider a derivation of a sequent as in Definition \ref{def_lp-grammar}, negative subtypes of $G$ may appear within it (not as proper subtypes) only in antecedents of sequents while positive subtypes may appear only in their succedents (proof is by the induction on the length of a derivation).

\section{lBVASSAM}\label{sec_def_lbvassam}
In this section, we aim to define a new formalism called linearly-restricted branching vector addition system with states and additional memory. This kind of systems is based on branching vector addition systems with states (BVASS) defined in \cite{VermaG05}, which in turn extend vector addition systems. The latter systems are introduced in \cite{KarpM69}; they represent a very natural and simple formalism, which is equivalent to well-known Petri nets. Countless modifications of vector addition systems are considered in the literature: VASP, VASS, AVASS, BVASS, EBVASS, EVASS, PVASS etc. These extensions are used for different purposes; it should be noted that one of them is proving undecidability of linear logic and its fragments \cite{LincolnMSS92, Kanovich95}. Speaking of branching vector addition systems with states \cite{VermaG05} (BVASS), they are developed as a natural extension of both vector addition systems and Parikh images of context-free grammars. Here is the formal definition of BVASS:
\begin{definition}\label{def_bvass}
	A \emph{branching vector addition system with states (BVASS)} is a tuple $G = \langle Q, \mathcal{P}_0,\mathcal{P}_1,\mathcal{P}_2,s,K\rangle$ where 
	\begin{enumerate}
		\item $K \in \Nat$ is the \emph{dimension of $G$};
		\item $Q$ is a finite set of \emph{states};
		\item $\mathcal{P}_0$ is a finite set of \emph{axioms} of the form $q(\nu)$ where $q \in Q$, $\nu \in \Nat^K$;
		\item $\mathcal{P}_1$ is a finite set of \emph{unary rules} of the form $p(x+\delta_2) \leftarrow q(x+\delta_1)$ where $p,q \in Q$, $\delta_1,\delta_2 \in \Nat^K$;
		\item $\mathcal{P}_2$ is a finite set of \emph{binary rules} of the form $p(x+y) \leftarrow q(x),r(y)$ where $p,q,r \in Q$;
		\item $s \in Q$ is the distinguished \emph{accepting state}.
	\end{enumerate}
	A formula $p(v)$ (let us call it a \emph{fact}) for $p\in Q$, $v \in \Nat^K$ is \emph{derivable in such a BVASS $G$} if one of the following holds (an inductive definition):
	\begin{itemize}
		\item $p(v) \in \mathcal{P}_0$;
		\item $v = w+\delta_2$ where $w \in \Nat^K$, $p(x+\delta_2)\leftarrow q(x+\delta_1) \in \mathcal{P}_1$, and $q(w+\delta_1)$ is derivable in $G$;
		\item $v = u+w$ where $u,w \in \Nat^K$, $p(x+y) \leftarrow q(x),r(y) \in \mathcal{P}_2$, and $q(u)$, $r(w)$ are derivable in $G$.
	\end{itemize}
	\emph{The language $L(G)$ generated by such a BVASS $G$} consists of vectors $v \in \Nat^K$ such that $s(v)$ is derivable in $G$.
\end{definition}
\begin{definition}
	\emph{The size of a derivation} is the total number of axioms and rule applications occurring in it.
\end{definition}

To introduce lBVASSAM, we make two changes in the definition of BVASS:
\begin{enumerate}
	\item We allow one to use ``additional memory''; this means that main derivable objects in such systems are still vectors $v \in \Nat^K$ but in the end of a derivation, i.e. when we obtain a fact $s(v)$, we cut off a part of $v$ (in other words, we project this vector onto a subspace $\Nat^k$ for some $k \le K$). Moreover, we require that at the end of a derivation there are only zeroes in the part of $v$ we cut off; hence we have a zero test, which, however, can be used only at the last step of a derivation.
	\item We limit the size of a derivation of a vector $v$ in BVASS by requiring that the size must not exceed $C |v|$ for some constant $C$ (we call this \emph{a linear restriction}). That is, the size of a derivation is bounded by a linear function of the size of the resulting vector. Note that this restriction immediately makes the reachability problem decidable and even places it in NP since in order to check that a vector $v$ is derivable, we only need to check all possible derivations of $s(v)$ of the length $\le C |v|$.
\end{enumerate}
Let us formally introduce both modifications. Given a vector $v \in \Nat^k$ and $K \ge k$, we denote by $\iota_K(v)$ the vector $v^\prime \in \Nat^K$ such that $v^\prime_i = v_i$ ($i=1,\dotsc,k$) and $v^\prime_i = 0$ for $i > k$.
\begin{definition}\label{def_bvassam}
	A \emph{branching vector addition system with states and additional memory (BVASSAM)} is a tuple $G = \langle Q, \mathcal{P}_0,\mathcal{P}_1,\mathcal{P}_2,s,k, K\rangle$ where all the components except for $k$ are defined as in Definition \ref{def_bvass}, and $0 \le k \le K$.
	\\
	\emph{The language $L(G)$ generated by such a BVASSAM $G$} consists of vectors $v \in \Nat^k$ such that $s(\iota_K(v))$ is derivable in $\langle Q, \mathcal{P}_0,\mathcal{P}_1,\mathcal{P}_2,s,K\rangle$.
\end{definition}

\begin{definition}\label{def_lbvassam}
	A \emph{linearly-restricted BVASSAM $G\vert_{C}$ (lBVASSAM)} is a BVASSAM $G$ equipped with a natural number $C$. 
	\\
	\emph{The language $L(G\vert_C)$ generated by this grammar} consists of vectors $u  \in \Nat^k$ such that there exists a derivation of $s(\iota_K(u))$ in $G$ of the size not greater than $C \cdot |u|$.
\end{definition}

\begin{remark}
	It is not so common to directly restrict the length of derivations for formal grammars; usually, in order to decrease expressive power of some class of formal grammars, restrictions are imposed on the structure of rules used in them. For example, unrestricted grammars may include rules of the form $\alpha \to \beta$ where $\alpha$ and $\beta$ are arbitrary strings; however, such grammars are too powerful. However, if we allow one to use only rules of the form $A \to \beta$, then we obtain context-free grammars, which are effectively decidable. 
	
	One example of imposing a linear restriction similar to that from Definition \ref{def_lbvassam} can be found in \cite{Rambow94} for multiset-valued linear index grammars; however, we failed to find a simple connection between these grammars and $\LP$-grammars, so we develop our own definitions. For us, the main motivation for such a restriction is Construction \ref{construction_lbvassam_to_lp} presented in the next section. It would also be interesting to consider other kinds of restrictions, e.g. a quadratic one: we might define a \emph{qBVASSAM} $G\vert^2_C$, i.e. a BVASSAM $G$ equipped with a positive integer $C$, for which we consider derivations of the size not greater than $C\cdot |u|^2$. In general, the restriction might be of the form $C\cdot f(u)$ for some function $f$.
	
	Note, however, that for each BVASSAM $G$ there exists a constant $C_G$ such that, if $s(v)$ is derivable in $G$, then the size of its derivation must be at least $C_G|v|$. Indeed, if $D_G$ is the maximum of all $|\delta_1|$ for $p(x+\delta_2) \leftarrow q(x+\delta_1) \in \mathcal{P}_1$, and of all $|\nu|$ for $p(\nu) \in \mathcal{P}_0$, then a derivation of the size $n$ always results in a fact $q(v)$ such that $|v|\le D_G \cdot n$ (the proof is by induction on $n$); finally, let $C_G$ equal $1/D_G$. To conclude this remark, the linear restriction is the least possible one that does not make the resulting language finite; this property makes it interesting by itself.
\end{remark}

\section{Equivalence of linearly-restricted BVASSAM and LP-grammars}\label{sec_main_result}

The main result of this paper is that lBVASSAM are equivalent to $\LP$-grammars in the sense that a language $L$ is generated by an $\LP$-grammar if and only if its Parikh image $\pi(L)$ is generated by an lBVASSAM. The proof is split into two directions: the ``if'' direction is proved in Section \ref{ssec_lbvassam_to_lp} and the ``only if'' one is proved in Section \ref{ssec_lp_to_lbvassam}.

\subsection{From lBVASSAM to LP-Grammars}\label{ssec_lbvassam_to_lp}
\begin{construction}[an $\LP$-grammar corresponding to a lBVASSAM]\label{construction_lbvassam_to_lp}
Given an lBVASSAM $G = G\vert_{C} = \langle Q, \mathcal{P}_0,\mathcal{P}_1,\mathcal{P}_2,s,k,K\rangle \vert_{C}$, our aim is to construct an $\LP$-grammar generating $\pi^{-1}(L(G))$ (we assume that it is over the alphabet $\Sigma = \{a_1,\dotsc,a_k\}$). Let us introduce several preliminary constructions:
\begin{enumerate}
	\item We consider all states from $Q$ as primitive types; besides, we introduce new primitive types $g_1,\dotsc,g_K,f$ ($g_i$ correspond to standard-basis vectors $e_i$ in $\Nat^K$, and $f$ stands for ``finish!''). 
	\\
	Let us agree on the following notation: if $v \in \Nat^K$ is a natural-valued vector, then $g^v \eqdef g_1^{v_1} \cdot \dotsc \cdot g_K^{v_K}$ and $g \times v \eqdef g_1 \times {v_1},\dotsc,g_K \times {v_K}$. Hence $g^v$ consists of $g_i$-s combined using the product $\cdot$ while $g \times v$ is a multiset consisting of $g_i$-s.
	\item For each $\varphi_0 = q(\nu) \in \mathcal{P}_0$ we define a type $T(\varphi_0) \eqdef f/ g^\nu/q$;
	\item For each $\varphi_1 = q(x+\delta_2) \leftarrow p(x+\delta_1) \in \mathcal{P}_1$ we define a type $T(\varphi_1) \eqdef (p \cdot g^{\delta_1})/g^{\delta_2}/q$;
	\item For each $\varphi_2 = q(x+y) \leftarrow p(x),r(y) \in \mathcal{P}_2$ we define a type $T(\varphi_2) \eqdef f/(f/r)/(f/p)/q$.
	\item Let $\mathcal{P}$ denote $\mathcal{P}_0 \cup \mathcal{P}_1 \cup \mathcal{P}_2$.
\end{enumerate}
	
Now we define an $\LP$-grammar itself, which we denote as $\LPG(G)$: $\LPG(G) \eqdef \langle \Sigma,f/s,\triangleright \rangle$ where $a_i \triangleright A$ iff $A = g_i \cdot T(\varphi_1)\cdot \dotsc \cdot T(\varphi_{j})$ for some $0 \le j \le C$ such that all $\varphi_l$ are from $\mathcal{P}$ ($i$ here changes from $1$ to $k$).
\end{construction}

The main idea of the construction is that we encode each axiom and rule $\varphi$ of an lBVASSAM by a type $T(\varphi)$ and then ``attach'' these types to a primitive type $g_i$.

\begin{remark}
	Formally, the definition of $A^l$ is not correct if $l=0$ (what is the product of $A$ zero times?); consequently, $g^v$ can also be undefined. A canonical way of understanding $A^0$ is $A^0 = \mathbf{1}$; however, we do not have the unit in $\LP$. Nevertheless, let us notice that in Construction \ref{construction_lbvassam_to_lp} types of the form $g^v$ appear only in certain positions, namely, within types of the form $f/g^\nu/q$ and $(p \cdot g^{\delta_1})/g^{\delta_2}/q$. This suggests the following treatment of the problematic cases:
	\begin{itemize}
		\item If $v_{i_1}\ne 0$, \dots, $v_{i_j} \ne 0$ ($j \ge 1$, $i_1<\dotsc<i_j$) and $v_t = 0$ for $t \not\in \{i_1,\dotsc,i_j\}$, then we define $g^v$ as $g_{i_1}^{v_{i_1}} \cdot \dotsc \cdot g_{i_j}^{v_{i_j}} $.
		\item If $\nu = \vec{0}$ in the type $f/g^\nu/q$, then we simply take a type of the form $f/q$ instead.
		\item If $\delta_1 = \vec{0}$ in the type $(p \cdot g^{\delta_1})/g^{\delta_2}/q$, then we take a type of the form $p/g^{\delta_2}/q$ instead.
		\item If $\delta_2 = \vec{0}$ in the type $(p \cdot g^{\delta_1})/g^{\delta_2}/q$, then we take a type of the form $(p\cdot g^{\delta_1})/q$ instead. Finally, if both $\delta_1 = \delta_2 = \vec{0}$, then we use the type $p/q$.
	\end{itemize}
\end{remark}

\begin{remark}\label{remark_polynomial}
	The total number $N$ of types involved in $\LPG(G)$ (more precisely, the cardinality of the binary relation $\triangleright$) can be estimated as follows: $K|\mathcal{P}|^C \le N \le K\left(|\mathcal{P}|+1\right)^C$; hence the transformation procedure is exponential w.r.t. $C$, but polynomial, if $C$ is fixed.
\end{remark}

Our goal is to prove the following theorem:
\begin{theorem}\label{th_bvass_to_lp}
	Let $G = G^\prime \vert_C$ be an lBVASSAM. Then $L(G) = \pi(L(\LPG(G)))$.
\end{theorem}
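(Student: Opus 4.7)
The plan is to relate cut-free $\LP$-derivations of the form $p, g \times w, T(\varphi_1),\dots,T(\varphi_m) \to f$ to BVASS derivation trees of $p(w)$, using the encoding of Construction \ref{construction_lbvassam_to_lp}.

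For $L(G) \subseteq \pi(L(\LPG(G)))$, I would first prove by induction on BVASS derivations the soundness lemma: if $p(w)$ is BVASS-derivable via axioms/rules $\varphi_1,\dots,\varphi_m$, then $\LP \vdash p, g \times w, T(\varphi_1),\dots,T(\varphi_m) \to f$. The axiom case uses two $(/\to)$-applications on $T(\varphi_0)$, iterated $(\to\cdot)$, and the axiom $f \to f$; the unary-rule case uses two $(/\to)$-applications on $T(\varphi_1)$ and the induction hypothesis; the binary-rule case combines two subderivations through three $(/\to)$-applications on $T(\varphi_2)$ together with the $(\to/)$ rule. Specialising to $p = s$, $w = \iota_K(v)$, with size $m \le C|v|$, gives $\LP \vdash s, g \times \iota_K(v), T(\varphi^1),\dots,T(\varphi^m) \to f$. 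Because $\iota_K(v)$ vanishes beyond position $k$, only $g_1,\dots,g_k$ appear; I then distribute the $m$ types among the $n = |v|$ $g$-symbols with at most $C$ per symbol (feasible since $m \le Cn$), repack with $(\cdot\to)$, and apply $(\to/)$ to form $A_1,\dots,A_n \to f/s$ where each $A_i$ is a type assigned by the grammar $\LPG(G)$, yielding $v \in \pi(L(\LPG(G)))$.

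For the converse, from an $\LPG(G)$-derivation $A_1,\dots,A_n \to f/s$ with Parikh image $v$, the invertibility of $(\to/)$ and $(\cdot\to)$ recalled in the Preliminaries yields $\LP \vdash s, g \times \iota_K(v), T(\varphi^1),\dots,T(\varphi^m) \to f$ with $m \le C|v|$. The main ingredient is then the completeness lemma: if $\LP \vdash p, g \times w, T(\varphi^1),\dots,T(\varphi^m) \to f$, then $p(w)$ admits a BVASS derivation in $G^\prime$ of size at most $m$ using only the axioms/rules $\varphi^1,\dots,\varphi^m$. Specialising to $p = s$, $w = \iota_K(v)$ then gives a BVASS derivation of size at most $C|v|$, hence $v \in L(G)$.

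The completeness lemma is the crux. I would apply cut-elimination and then induct on the cut-free derivation. Since $f$ is primitive and the antecedent nontrivial, the last rule must be $(\mathrm{perm})$, $(\cdot\to)$, or $(/\to)$; the first two reduce immediately to the induction hypothesis by treating antecedents as multisets. In the $(/\to)$ case, the principal formula must be some $T(\varphi^i)$, since primitives and $g$-atoms carry no top-level division, and its outermost division is always of the shape $\cdot/q$ for some state $q$. The key technical tool is a linear conservation invariant, provable by a straightforward induction on $\LP$-derivations: for every derivable $\Pi \to C$ and every primitive $p_0$, $\nu(C)_{p_0} = \sum_{A \in \Pi} \nu(A)_{p_0}$, where $\nu(p) = e_p$, $\nu(A/B) = \nu(A) - \nu(B)$, $\nu(A \cdot B) = \nu(A) + \nu(B)$. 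Combined with polarity-preserving subformula analysis and the nonemptiness requirement on antecedents, this invariant constrains the split of the antecedent between the two premises of $(/\to)$ so that each branch realises exactly the state-counts prescribed by the corresponding BVASS operation. For $T(\varphi_2)$ the cascade of three $(/\to)$-decompositions exposes subderivations $\Pi_L \to f/p^\prime$ and $\Pi_R \to f/r$, which via $(\to/)^{-1}$ take the inductive form and correspond to the two children of the binary-rule node; analogously, decomposing $T(\varphi_0)$ closes off a leaf and $T(\varphi_1)$ produces a single recursive sub-call. The main difficulty is the bookkeeping that shows every admissible partition of the antecedent yields a valid BVASS sub-derivation and that every $T$-type is consumed exactly once, contributing one node to the BVASS tree; the invariant together with polarity and nonemptiness constraints carries this verification and yields the size bound $m$.
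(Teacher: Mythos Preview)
Your overall architecture matches the paper's: both directions go through a bridging lemma (the paper's Lemma~\ref{lemma_main_bvass_to_lp}) asserting that $t,\,g\times u,\,T(\psi_1),\dots,T(\psi_m)\to f$ is $\LP$-derivable iff $t(u)$ has a BVASS derivation built from $\psi_1,\dots,\psi_m$. Your soundness direction and the final packaging via invertibility of $(\to/)$ and $(\cdot\to)$ are essentially identical to what the paper does in the proof of Theorem~\ref{th_bvass_to_lp}.

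The divergence is in the completeness direction, and there your proposal has a real gap. The paper does \emph{not} use a numerical count invariant. It instead isolates two structural lemmas: Lemma~\ref{lemma_wolf_q}, which says any derivable $B_1,\dots,B_m\to q$ with $B_i\in STp(\LPG(G))$ and $q\in Q$ must be the axiom $q\to q$; and Lemma~\ref{lemma_wolf_e_general}, which says any derivable $B_1,\dots,B_m\to g^u$ must have $\mathcal{D}(B_1,\dots,B_m)=g\times u$. Both are proved by tracing the axiom leaf and observing that the grammar contains no subtype whose numerator is a bare state or a pure $g$-product, so the side premise arising at each $(/\to)$ step is forced to be trivial. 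These lemmas are then invoked inside the induction of Lemma~\ref{lemma_main_bvass_to_lp} every time the outermost $/q$ or $/g^{\delta}$ of some $T(\psi_i)$ is peeled off.

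Your invariant $\nu$ is too coarse to play this role. Suppose the lBVASSAM contains unary rules $p(x)\leftarrow q(x)$ and $q(x)\leftarrow p(x)$ with $\delta_1=\delta_2=0$; then $T$ yields the types $q/p$ and $p/q$, and $\LP\vdash q/p,\,p/q,\,q\to q$ holds even though $\nu(q/p)+\nu(p/q)+\nu(q)=e_q$. So when you split at the outermost $/q$ of some $T(\varphi^i)$ and obtain a premise $\Pi\to q$, neither the count invariant nor polarity nor nonemptiness forces $\Pi$ to be a single state: balanced detours pass all three tests. Crucially, your induction hypothesis is stated only for sequents of the shape $p,\,g\times w,\,T(\dots)\to f$, so it says nothing about $\Pi\to q$, and the same problem recurs for the premise $\Pi\to g^{\nu}$ when you decompose $T(\varphi_0)$. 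You therefore need explicit auxiliary lemmas characterising these side sequents, as the paper provides; the bookkeeping you allude to cannot be carried by $\nu$ alone.
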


Let us prove several lemmas first.
\begin{lemma}\label{lemma_wolf_q}
	If $\LP \vdash B_1,\dotsc,B_m \to q$ where $B_i$ are from $STp(\LPG(G))$ and $q \in Q$, then $n=1$ and $B_1 = q$.
\end{lemma}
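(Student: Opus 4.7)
I would argue by induction on the size of a cut-free derivation of $B_1,\dotsc,B_m \to q$, using the fact that cut is admissible in $\LP$ (already noted in the preliminaries). Cut elimination gives the subformula property: every formula appearing in the derivation is a subformula of some $B_i$ or of $q$, hence it lies in $STp(\LPG(G))$, since $STp(\LPG(G))$ is closed under taking subformulas.

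Then I would do a case analysis on the last inference rule. The right rules $(\to/)$ and $(\to\cdot)$ are excluded because the succedent $q$ is a primitive type. If the last rule is the axiom, then $m=1$ and $B_1 = q$ immediately. If it is $(\mathrm{perm})$, the antecedent multiset is unchanged, so the induction hypothesis applied to the premise gives the claim. If it is $(\cdot\to)$, the premise has the shape $\Gamma, A, B \to q$ with at least two formulas in the antecedent; by IH such an antecedent must be a singleton, a contradiction, so this case cannot occur.

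The only delicate case is $(/\to)$: the conclusion is $\Gamma, C/D, \Pi \to q$ with premises $\Pi \to D$ and $\Gamma, C \to q$. Applying the induction hypothesis to the second premise forces $\Gamma = \emptyset$ and $C = q$, so the principal formula of the rule has the shape $q/D$ and must lie in $STp(\LPG(G))$. The main obstacle, and the heart of the proof, is to rule this out by inspecting the types of $\LPG(G)$: the implicational subformulas of the distinguished type $f/s$, of $T(\varphi_0)$, $T(\varphi_1)$, $T(\varphi_2)$, and of the assigned products $g_i \cdot T(\varphi_1)\cdot \dotsc \cdot T(\varphi_j)$ exhaust the set of subtypes, and their numerators are always of the form $f$, $f/g^\nu$, $p\cdot g^{\delta_1}$, $(p\cdot g^{\delta_1})/g^{\delta_2}$, $f/(f/r)$, or $(f/(f/r))/(f/p)$. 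None of these is a primitive type from $Q$, so no formula of the form $q/D$ with $q \in Q$ belongs to $STp(\LPG(G))$. This contradiction closes the last case and finishes the induction.
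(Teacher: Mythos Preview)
Your argument is correct and rests on the same decisive observation as the paper's: no type of the form $q/D$ with $q\in Q$ lies in $STp(\LPG(G))$. The organization differs slightly. You induct on the size of a cut-free derivation and case-split on the \emph{last} rule, using the induction hypothesis on the major premise of $(/\to)$ to force the principal formula to have shape $q/D$. The paper instead locates the axiom instance $q\to q$ whose succedent is the ancestor of the endsequent's succedent and case-splits on the rule applied \emph{directly to that axiom}; since this traced axiom sits on the main branch, only the major-premise position of $(/\to)$ is possible, and again a formula $q/D$ would have to appear. The paper's route avoids the explicit induction and the $(\mathrm{perm})$ and $(\cdot\to)$ subcases; yours is the more standard last-rule analysis. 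Both are short and reduce to the same syntactic inspection of $STp(\LPG(G))$.

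One small caveat that applies equally to your enumeration and to the paper's own claim: the Remark after Construction~\ref{construction_lbvassam_to_lp} replaces $(p\cdot g^{\delta_1})/g^{\delta_2}/q$ by $p/g^{\delta_2}/q$ (or even by $p/q$) when $\delta_1=\vec 0$, and in that degenerate case a numerator $p\in Q$ \emph{does} occur in $STp(\LPG(G))$; for instance $p/g^{\delta_2},\,g^{\delta_2}\to p$ is derivable, contradicting the lemma as literally stated. This is harmless for the intended application in Lemma~\ref{lemma_main_bvass_to_lp}, where the antecedent types under consideration are always full $T(\psi)$'s, primitive $g_j$'s, or a single state $t$, but strictly speaking your list of possible numerators (and the lemma's hypothesis) should acknowledge this edge case.
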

\begin{proof}
	The proof is by induction on the length of a derivation. The base case is trivial ($q \to q$ is an axiom). 
	
	To prove the induction step consider the last rule applied in a derivation of $B_1,\dotsc,B_m \to q$. 
	
	\textit{Case $(/\to)$}: 
	$$
	\infer[]{\Pi,q/D \to q}{q \to q & \Pi \to D}
	$$
	Here we apply the induction hypothesis to conclude that one of the premises is $q \to q$. Therefore, one of $B_i$-s must be of the form $q/D$; however, there are no types of such form in $STp(\LPG(G))$, which leads us to a contradiction. Consequently, it is not the case that $(/\to)$ is applied.
	
	\textit{Case $(\cdot \to)$}: 
	$$
	\infer[]{\Gamma, A\cdot B \to q}{\Gamma, A,B \to q}
	$$
	By the induction hypothesis, $\Gamma,A,B=q$; however, there are at least two types in $\Gamma,A,B$. This is a contradiction.
\end{proof}
We also need a similar lemma for sequents with elements of the form $g_i$ in the succedent. To formulate it, let us firstly introduce a function $\mathcal{D}$, which takes a multiset of types of $\LP$ and returns another multiset of types: $\mathcal{D}(p) = p$ ($p \in Pr$), $\mathcal{D}(A/B) = A/B$, $\mathcal{D}(A \cdot B) = \mathcal{D}(A), \mathcal{D}(B)$, $\mathcal{D}(\Gamma,\Delta) = \mathcal{D}(\Gamma), \mathcal{D}(\Delta)$. Informally, we replace all outermost products $\cdot$ in a multiset of types by commas.
\begin{lemma}\label{lemma_wolf_e_general}
	If $\LP \vdash B_1,\dotsc,B_m \to g^u$ where $B_i$ are from $STp(\LPG(G))$ and $\vec{0}\ne u \in \Nat^K$, then $\mathcal{D}(B_1,\dotsc,B_m)$ and $g \times u$ are equal as multisets.
\end{lemma}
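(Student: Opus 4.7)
The plan is to prove the lemma by induction on the size of the derivation of $B_1,\dotsc,B_m\to g^u$, performing a case analysis on the last rule. Since $\LP$ admits cut-elimination, every type occurring in the derivation is a subtype of some $B_i$ or of $g^u$, and hence lies in $STp(\LPG(G))$; this lets the IH be applied to any strict subderivation whose endsequent has succedent of the form $g^v$ with $v\ne 0$.

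In the base case the derivation is a single axiom $A\to A$, so $m=1$ and $B_1=g^u$, giving $\mathcal{D}(B_1)=g\times u$ by the definition of $\mathcal{D}$ on a product of $g_i$-atoms. In the inductive step, $(\to/)$ is ruled out at once because $g^u$ is never a division, and $(\mathrm{perm})$ is transparent under the multiset convention. For $(\cdot\to)$ applied to some $B_i=C\cdot D\in STp$, the premise has antecedent $\Gamma\setminus\{B_i\}\cup\{C,D\}$, and since $\mathcal{D}$ commutes with splitting top-level products the IH on the premise transfers directly. For $(\to\cdot)$ the succedent factors as $g^u=g^{u'}\cdot g^{u''}$ with $u'+u''=u$ and $|u'|,|u''|\ge 1$; applying the IH to the two premises $\Pi\to g^{u'}$ and $\Psi\to g^{u''}$ and taking the multiset union gives $\mathcal{D}(\Gamma)=g\times u$.

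The core case is $(/\to)$, which I will show cannot be the last rule. Suppose it were, applied to $B_i=E/A\in STp$ with premises $\Pi\to A$ and $\Gamma',E\to g^u$, where $\Pi\sqcup\Gamma'=\Gamma\setminus\{B_i\}$. By IH on the second (strictly smaller) premise, $\mathcal{D}(\Gamma'\cup\{E\})=g\times u$, so every element of $\mathcal{D}(E)$ must be one of the $g_i$'s. It remains to enumerate the possible numerators $E$ of a division in $STp(\LPG(G))$ and verify that none of them has this property. Running through Construction~\ref{construction_lbvassam_to_lp} and its edge cases from the Remark, the possibilities for $E$ are $f$, $f/g^\nu$, $p$, $p/g^{\delta_2}$, $p\cdot g^{\delta_1}$, $(p\cdot g^{\delta_1})/g^{\delta_2}$, $f/(f/r)$, and $f/(f/r)/(f/p)$; in each case $\mathcal{D}(E)$ contains at least one ``alien'' element -- either the atom $f$, some state $p\in Q$, or a division type -- and hence cannot be a submultiset of $g\times u$. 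This contradiction rules out $(/\to)$ and closes the induction.

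The main obstacle will be this final enumeration; the underlying invariant I expect to exploit is that every division arising from the construction is built ``around'' the atom $f$ or some state $p\in Q$, so its numerator can never be a pure product of $g_i$-letters. Everything else is routine induction on derivation size, combined with the subformula property guaranteed by cut-elimination.
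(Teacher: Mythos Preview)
Your proof is correct and follows essentially the same approach as the paper's: induction on the derivation, with the $(/\to)$ case ruled out by observing that the numerator $E$ of any division in $STp(\LPG(G))$ always contains $f$ or a state $q\in Q$ and hence $\mathcal{D}(E)$ cannot consist solely of $g_i$'s. The paper states this as a single blanket observation while you spell out the enumeration of possible $E$'s explicitly; one minor slip is your claim that every type in the derivation ``hence lies in $STp(\LPG(G))$'' (subtypes of $g^u$ need not), but this is harmless since your actual IH applications only use antecedent types that are subtypes of the $B_i$.
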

\begin{proof}
	The proof is by induction on the length of a derivation of the sequent. The base case is trivial since in such a case we have a sequent of the form $g_i \to g_i$ for some $i$. To prove the induction step let us consider the last rule applied. Again, there are several cases:
	
	Case $(\to/)$ is impossible since there are no divisions in $g^u$.
	
	Case $(/\to)$: then the last rule application has the form
	$$
	\infer[(/\to)]{\Gamma,\Delta,E/B \to g^u}
	{
		\Gamma, E \to g^u &
		\Delta \to B
	}
	$$
	Applying the induction hypothesis we conclude that $E$ must be a product of several primitive types of the form $g_i$. However, there are no types of the form $E/B$ with such $E$ in $STp(\LPG(G))$ (indeed, for each type $C/D$ from $STp(\LPG(G))$ it is the case that $C$ includes either $f$ or some $q \in Q$). This allows us to draw a conclusion that the last rule application cannot be of the form $(/\to)$.
	
	For the case $(\cdot \to)$ the last rule application is of the form
	$$
	\infer[(\cdot\to)]{\Gamma,E_1\cdot E_2 \to g^u}
	{
		\Gamma,E_1, E_2 \to g^u
	}
	$$
	It suffices to notice that the function $\mathcal{D}$ returns the same output for both antecedents; the induction hypothesis completes the proof for this case.
	
	Case $(\to\cdot)$: the last rule application must be of the form
	$$
	\infer[]{\Gamma_1,\Gamma_2 \to g^u}
	{
		\Gamma_1 \to g^{u_1} &
		\Gamma_2 \to g^{u_2}
	}
	$$
	where $u_1+u_2 = u$. Applying the induction hypothesis to the premises, we immediately succeed.
\end{proof}

The following lemma is crucial:
\begin{lemma}\label{lemma_main_bvass_to_lp}
	A sequent of the form $g\times u, T(\psi_1),\dotsc, T(\psi_m), t \to f$ ($u \in \Nat^K$, $m \in \Nat$, $\psi_i \in \mathcal{P}$, $t \in Q$) is derivable if and only if $t(u)$ has a derivation in $G$ such that all the occurrences of axioms and rules in it are exactly those of $\psi_1,\dotsc,\psi_m$.
\end{lemma}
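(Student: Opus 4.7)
The plan is to prove each direction by induction.

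For the \emph{if} direction, I would induct on the size of the BVASS derivation of $t(u)$. The base case treats an axiom $\psi_1 = t(\nu) \in \mathcal{P}_0$ with $T(\psi_1) = f/g^\nu/t$: starting from the axioms $t \to t$, $f \to f$, and $g_i \to g_i$ (combined via $(\to\cdot)$ into $g\times\nu \to g^\nu$), two applications of $(/\to)$ produce $g\times\nu, T(\psi_1), t \to f$. For the inductive step with last BVASS step a unary rule $\psi_m = t(x+\delta_2)\leftarrow p(x+\delta_1)$, the induction hypothesis yields $\LP \vdash g\times(w+\delta_1), T(\psi_1),\dotsc,T(\psi_{m-1}), p \to f$; combining $p, g\times\delta_1 \to p\cdot g^{\delta_1}$ via $(\to\cdot)$, using invertibility of $(\cdot\to)$ in reverse, and then applying $(/\to)$ twice to $T(\psi_m) = (p\cdot g^{\delta_1})/g^{\delta_2}/t$ produces the required sequent. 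The binary case is similar: the hypothesis gives sequents for $p(u_1)$ and $r(u_2)$; applying $(\to/)$ to each yields derivations of $\to f/p$ and $\to f/r$, after which three $(/\to)$'s on $T(\psi_m) = f/(f/r)/(f/p)/t$ complete the derivation.

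For the \emph{only if} direction I would invoke cut-elimination and induct on the size of a cut-free $\LP$-derivation. Since the succedent $f$ is primitive, the last rule (modulo permutations, which are inessential because antecedents may be regarded as multisets) cannot be $(\to/)$ or $(\to\cdot)$. As the antecedent contains only primitives ($g_i$'s and $t$) together with types $T(\psi_i)$, and each $T(\psi_i)$ has outermost operator $/$ with primitive denominator, the last non-permutation rule must be $(/\to)$ applied to some $T(\psi_j)$. By Lemma \ref{lemma_wolf_q}, the antecedent of its left premise must be a singleton equal to that denominator; the only primitive $Q$-type in the original antecedent is $t$, so the denominator is $t$, which means $\psi_j$ produces the state $t$ (so $\psi_j$ is an axiom $t(\nu)$, a unary rule $t(x+\delta_2)\leftarrow p(x+\delta_1)$, or a binary rule $t(x+y)\leftarrow p(x), r(y)$).

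Next I would peel off further rules in the right premise, splitting by the form of $\psi_j$. If $\psi_j$ is an axiom $t(\nu)$, the remaining head $f/g^\nu$ must eventually be eliminated by another $(/\to)$; Lemma \ref{lemma_wolf_e_general} forces its left premise's antecedent to be exactly $g\times\nu$, and an analogue of Lemma \ref{lemma_wolf_q} for $f$ shows that the right premise reduces to the axiom $f \to f$, forcing $m = 1$ and $u = \nu$, which matches the trivial BVASS derivation. If $\psi_j$ is a unary rule, an analogous unrolling of $(/\to)$'s and $(\cdot\to)$'s on $(p\cdot g^{\delta_1})/g^{\delta_2}$ yields a sequent $g\times w, T(\psi_1),\dotsc,\widehat{T(\psi_j)},\dotsc,T(\psi_m), p \to f$ with $u = w + \delta_2$, to which the induction hypothesis applies; combining the resulting BVASS derivation of $p(w+\delta_1)$ with $\psi_j$ yields one of $t(u)$. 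If $\psi_j$ is a binary rule, the remaining antecedent distributes across two sub-derivations corresponding to $p(u_1)$ and $r(u_2)$ with $u_1 + u_2 = u$, each handled by the induction hypothesis and combined via $\psi_j$.

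The main obstacle is justifying the precise splitting of the antecedent across the $(/\to)$-branches, ruling out wasted or out-of-order rule applications, and showing that no antecedent type is left unused. This requires an auxiliary lemma in the spirit of Lemma \ref{lemma_wolf_q} but with succedent $f$, asserting that $\LP \vdash B_1,\dotsc,B_m \to f$ with the $B_i$ in $STp(\LPG(G))$ forces $m = 1$ and $B_1 = f$; together with Lemma \ref{lemma_wolf_e_general}, the invertibility of $(\cdot\to)$ and $(\to/)$, and permutation invariance, this lets one normalize the derivation so that the above case analysis is exhaustive and the distribution of $g_i$'s and $T(\psi_i)$'s across branches is uniquely forced.
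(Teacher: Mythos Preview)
Your proposal follows essentially the same approach as the paper: induction on derivation length in both directions, with the ``only if'' direction driven by a case analysis on which $T(\psi_j)$ is the principal formula of the last $(/\to)$, using Lemma~\ref{lemma_wolf_q} to pin down the $t$-premise and Lemma~\ref{lemma_wolf_e_general} to identify the $g^{\delta}$-premises.

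One point to correct: the auxiliary lemma you propose, ``$\LP\vdash B_1,\dotsc,B_m\to f$ with $B_i\in STp(\LPG(G))$ forces $m=1$ and $B_1=f$'', is false as stated. The very sequent in the lemma, $g\times u,\,T(\psi_1),\dotsc,T(\psi_m),\,t\to f$, is a counterexample: $t\in Q$ lies in $STp(\LPG(G))$. The paper does not isolate a separate lemma here; instead it argues inline that once $t$ has been consumed, the remaining antecedent of the $f$-premise contains only $g_i$'s, $f$, and types $T(\psi_i)$, each of the latter having the shape $A/r$ with $r\in Q$. Any further $(/\to)$ would, via Lemma~\ref{lemma_wolf_q}, require some $r\in Q$ to occur in the antecedent, which it no longer does; hence no rule applies and the sequent must be the axiom $f\to f$. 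The same observation is what forces the ``residual'' head ($f/g^\nu$, $(p\cdot g^{\delta_1})/g^{\delta_2}$, or $f/(f/r)/(f/p)$) to be the principal formula at the next step rather than some other $T(\psi_i)$. If you restate your auxiliary claim with the extra hypothesis that no $B_i$ belongs to $Q$, the argument goes through and matches the paper's.
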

\begin{proof}
	There are two directions in this lemma, and we are going to prove both of them by a straightforward induction on the length of a derivation in a corresponding formalism. 
	
	Let us start with ``only if'', i.e. that provability of $g\times u, T(\psi_1),\dotsc, T(\psi_m), t \to f$ in $\LP$ implies derivability of $t(u)$ using $\psi_1,\dotsc,\psi_m$. The base case of the induction proof is trivial since the sequent of interest cannot be derived if $n=0$ since it is not an axiom (we have $t$ in the antecedent and $f \ne t$ in the succedent), and it cannot be obtained by applying any rule (because it includes only primitive types). 
	
	We proceed with the induction step by considering the last rule application in the derivation. Let $T(\psi_1)$ be the type that appears as the result of this application. There are three cases:
	
	\textbf{Case 1:} $\psi_1 = q(\nu) \in \mathcal{P}_0$. Then the final steps of the derivation must be of the form
	$$
	\infer[(/\to)]{
		g \times u, f/g^{\nu}/q,T(\psi_2),\dotsc, T(\psi_m),t \to f
	}{
		\infer[]{
			g \times u, f/g^{\nu},T(\psi_2),\dotsc, T(\psi_m) \to f
		}{
			g \times u_2, f,\Theta_2 \to f
			&
			g \times u_1, \Theta_1 \to g^\nu
		}
		&
		q \to q
	}
	$$
	Here $g \times u_1,g \times u_2 = g\times u$ (i.e. $u_1+u_2 = u$) and $\Theta_1, \Theta_2 = T(\psi_2),\dotsc, T(\psi_m)$ (as multisets, i.e., up to the order of types). Indeed, since $T(\psi_1) = f/g^\nu/q$ appears after the last rule application, there has to be a premise with $q$ in the succedent; according to Lemma \ref{lemma_wolf_q} this premise must be of the form $q \to q$. Since the only type from $Q$ in the antecedent of the sequent $g \times u, f/g^{\nu}/q,T(\psi_2),\dotsc, T(\psi_m),t \to f$ is $t$, we conclude that $q=t$.
	
	The sequent $g \times u, f/g^{\nu},T(\psi_2),\dotsc, T(\psi_m) \to f$ can be obtained only by an application of $(/\to)$. However, any of $T(\psi_i)$ ($i\in \{2,\dotsc,m\}$) cannot be the major type of this rule because each of them is of the form $A/r$ for some $r \in Q$ and for some $A$; if it was major, then it would be the case that $r$ is present in the antecedent of $g \times u, f/g^{\nu},T(\psi_2),\dotsc, T(\psi_m) \to f$, which does not hold. Consequently, $f/g^\nu$ must be major, thus the next step of the derivation is as shown above.
	
	Now, let us examine the sequent $g \times u_2, f,\Theta_2 \to f$. We claim that it must be an axiom. Otherwise, it appears as the result of some rule application; more precisely, it must be an instance of $(/\to)$. However, we notice that all the types with division in this sequent are of the form $A/r$ for some $r \in Q$; if the last rule application was $(/\to)$, then it would be of the form (according to Lemma \ref{lemma_wolf_q}):
	$$
	\infer[]{
		g \times u_2, f,\Theta_2 \to f
	}{
		\dotsc \to f
		&
		r \to r
	}
	$$ 
	This would imply that $r$ is present in $g \times u_2,f,\Theta_2$, which is not the case. Concluding we obtain that $g \times u_2$ and $\Theta_2$ are empty (in what follows that $u_2 = \vec{0}$).
	
	Now it suffices to apply Lemma \ref{lemma_wolf_e_general} to $g \times u_1, \Theta_1 \to g^\nu$, which allows us to draw a conclusion that $g \times u_1 = g\times \nu$ and that $\Theta_1$ is empty. Finally, we have $u = u_1 = \nu$, $q=t$ and hence $u$ is derivable in $G$ using $\psi_1$: $t(u) = q(\nu) = \psi_1$ is an axiom. 
	
	\textbf{Case 2:} $\psi_1 = q(x+\delta_2) \leftarrow p(x+\delta_1) \in \mathcal{P}_1$. Then the last steps of the derivation must be of the form
	$$
	\infer[(/\to)]{
		g\times u, (p\cdot g^{\delta_1})/g^{\delta_2}/q,T(\psi_2),\dotsc,T(\psi_m), t \to f
	}{
		\infer[(/\to)]{
			g\times u, (p\cdot g^{\delta_1})/g^{\delta_2},T(\psi_2),\dotsc,T(\psi_m) \to f
		}{
			\infer[(\cdot \to)]{
				g \times u_1, p\cdot g^{\delta_1},\Theta_1 \to f
			}{
				g \times u_1, g\times \delta_1, \Theta_1, p \to f
			}
			&
			g \times u_2, \Theta_2 \to g^{\delta_2}
		}
		&
		q \to q
	}
	$$
	Here $g \times u_1 + g \times u_2 = g \times u$ (equivalently, $u_1 + u_2 = u$) and $T(\psi_2),\dotsc, T(\psi_m) = \Theta_1, \Theta_2$. Reasonings concerning the last two steps of the derivation are the same as for the previous case; they imply that $q=t$, $u_2 = \delta_2$, and that $\Theta_2$ is empty. Besides, we force using the rule $(\cdot \to)$, which is possible due to admissibility of the rule $(\cdot \to)^{-1}$; in fact, in the above derivation, $(\cdot\to)$ is used multiple times in a row, since we eliminate all the products in the type $p \cdot g^{\delta_1}$ and obtain a multiset of types $g\times \delta_1, p$. 
	\\
	Finally, we apply the induction hypothesis to the sequent $g \times u_1, g \times {\delta_1}, \Theta_1, p \to f$ and obtain that $p(u_1 + \delta_1)$ has a derivation in $G$ such that axioms and rules used in it are exactly those from $\Theta_1 = \psi_2,\dotsc,\psi_m$. It remains to notice that $g\times u = g \times (u_1+\delta_2)$ and $t(u) = q(u_1+\delta_2)$ can be derived from $p(u_1+\delta_1)$ using $\psi_1$. This concludes the proof.
	
	\textbf{Case 3.} $\psi_1 = q(x+y) \leftarrow p(x),r(y)$. Then the last steps of the derivation must be of the form
	$$
	\infer[(/\to)_1]{
		g \times u, f/(f/r)/(f/p)/q,T(\psi_2),\dotsc,T(\psi_m), t \to f
	}{
		\infer[(/\to)_2]{
			g \times u, f/(f/r)/(f/p),T(\psi_2),\dotsc,T(\psi_m) \to f
		}{
			\infer[(/\to)_3]
			{
				g \times u_2,g \times u_3, f/(f/r),\Theta_2,\Theta_3 \to f
			}{
				g \times u_3, f,\Theta_3 \to f
				&
				\infer[(\to /)]{g \times u_2, \Theta_2 \to f/r}{g \times u_2, \Theta_2, r \to f}
			}
			&
			\infer[(\to /)]{g \times u_1, \Theta_1 \to f/p}{g \times u_1, \Theta_1, p \to f}
		}
		&
		q \to q
	}
	$$
	Here $q=t$, $g \times u = g \times u_1 + g \times u_2 + g \times u_3$ (i.e. $u_1+u_2+u_3 = u$) and $\Theta_1,\Theta_2,\Theta_3 = T(\psi_2),\dotsc,T(\psi_m)$; the applications of $(/\to)$ are numbered in order to refer to them. Reasonings for this case are similar to those for Cases 1 and 2. The main observation is that only the type $f/(f/r)/(f/p)$ (the type $f/(f/r)$) can be major in a rule application with number $2$ (number $3$ resp.) because all other types in the antecedent are either primitive or of the form $A/x$ where $x \in Q$; however, there is no primitive type $x \in Q$ in the antecedent of each corresponding sequent. The rule applications of $(\to /)$ are forced using invertibility of $(\to /)$. The sequent $g \times u_3, f,\Theta_3 \to f$ can be derivable only if it is axiom, i.e. if $g\times u_3$, $\Theta_3$ are empty. Finally, we apply the induction hypothesis to $g \times u_2,\Theta_2, r \to f$ and $g \times u_1,\Theta_1, p \to f$ and conclude that $r(u_2)$ and $p(u_1)$ can be derived in $G$ with the multiset of axioms and rules used in total in both derivations equal to $\psi_2,\dotsc,\psi_m$. Finally we apply the rule $\psi_1$ and come up with $q(u_1+u_2) = q(u)$ as desired.
	
	Speaking of the ``if'' direction, we need to transform a derivation of $t(u)$ in $G$ into a derivation of $g \times u, T(\psi_1),\dotsc, T(\psi_m), t \to f$ where $\psi_1,\dotsc,\psi_m$ are all the axiom and rule occurrences in the derivation of $t(u)$. This is done straightforwardly by induction on $m$; in fact, each axiom or rule application in $G$ is remodeled in $\LP$ in the same way as shown above (Cases 1-3).
\end{proof}

We are ready to prove Theorem \ref{th_bvass_to_lp}.
\begin{proof}[Proof (of Theorem \ref{th_bvass_to_lp})]
	Let $w = w_1,\dotsc,w_m$ belong to $L(\LPG(G))$ (where $w_i \in \Sigma$). This is the case if and only if $\LP \vdash A_1,\dotsc, A_{m} \to f/s$ where $w_i \triangleright A_i$ ($i=1,\dotsc,m$) for some types $A_i$. Each type $A_i$ in the antecedent of this sequent is either primitive or is a product of several types (cf. Construction \ref{construction_lbvassam_to_lp}). 
	
	We claim that $w \in L(\LPG(G))$ if and only if the following sequent is derivable for some $n \le C |w| = C m$ and some $\varphi_i \in \mathcal{P}$:
	\begin{equation}\label{eq_bvass_to_lp}
		g\times v, T(\varphi_1),\dotsc, T(\varphi_n), s \to f
	\end{equation}
	Here $v = \iota_K(\pi(w))$. To prove the ``only if'' direction of the claim we use $(\cdot \to)^{-1}$ in the antecedent of $A_1,\dotsc, A_{m} \to f/s$ and thus release all types of the form $T(\varphi)$, which were combined together within types $A_1,\dotsc,A_{m}$, hence obtaining the sequent of the form $g\times v, T(\varphi_1),\dotsc, T(\varphi_n) \to f/s$. Finally, we move $s$ from the succedent to the antecedent using $(\to /)^{-1}$. The restriction on $n$ arises because there is at most $C$ types of the form $T(\varphi)$ combined within each $A_i$, and the total number of types $A_i$ equals $|w|=m$; therefore, the total number of types is less than or equal to $Cm$.
	
	To prove the ``if'' direction it suffices to find suitable types $A_1$, \dots, $A_m$ such that $w_i \triangleright A_i$ and $\LP\vdash A_1,\dotsc,A_m \to f/s$. Here they are: 
	\begin{enumerate}
		\item $A_i = (g \times \iota_K(\pi(w_i))) \cdot T(\varphi_{j_1}) \cdot \dotsc \cdot T(\varphi_{j_2})$ for $i = 1, \dotsc, \lfloor n/C \rfloor$ where $j_1 = C(i-1)+1$ and $j_2 = Ci$.
		\item $A_i = (g \times \iota_K(\pi(w_i))) \cdot T(\varphi_{j_1}) \cdot \dotsc \cdot T(\varphi_{j_2})$ for $i = \lfloor n/C \rfloor + 1$ where $j_1 = C\lfloor n/C \rfloor + 1$ and $j_2 = n$. It is required that $n/C \not \in \Nat$, otherwise, $A_i = g \times \pi(w_i)$.
		\item $A_i = g \times \iota_K(\pi(w_i))$ for $i > \lfloor n/C \rfloor + 1$.
	\end{enumerate}
	Informally we just distribute all $n$ types $T(\varphi_1),\dotsc,T(\varphi_n)$ among types $A_1,\dotsc,A_m$ in such a way that each $A_i$ contains at most $C$ types of the form $T(\varphi_j)$.
	Note that $g \times \iota_K(\pi(w_i))$ is a single primitive type since $w_i$ is one symbol. According to Construction \ref{construction_lbvassam_to_lp} $w_i \triangleright A_i$; finally, note that $A_1,\dotsc,A_m \to f/s$ is derivable from (\ref{eq_bvass_to_lp}) by using $(\cdot \to)$ several times and $(\to /)$. The claim is proved.
	
	Now it remains to apply Lemma \ref{lemma_main_bvass_to_lp}, which implies that (\ref{eq_bvass_to_lp}) is derivable for some $\varphi_1,\dotsc,\varphi_n \in \mathcal{P}$ if and only if $s(v)$ has a derivation in $G$ of the size $n$. The latter is equivalent to the statement that $\pi(w) \in L(G)$. Summarizing all the steps we obtain that $w \in L(\LPG(G))$ if and only if $\pi(w) \in L(G)$; this is the statement of the theorem.
\end{proof}

\subsection{From LP-Grammars to lBVASSAM}\label{ssec_lp_to_lbvassam}

\begin{construction}[an lBVASSAM corresponding to an $\LP$-grammar]\label{construction_lp_to_lbvassam}
	Assume we are given an $\LP$-grammar $G = \langle \Sigma,S,\triangleright \rangle$, $\Sigma = \{a_1,\dotsc,a_k\}$. We construct an equivalent lBVASSAM $\lBAM(G) \eqdef \langle Q, \mathcal{P}_0,\mathcal{P}_1,\mathcal{P}_2,S,k,K\rangle \vert_{F}$ as follows:
	\begin{enumerate}
		\item $Q \eqdef STp^{+}(G) \cup \{C/A \mid C \in STp^+(G),A \in STp^-(G)\}$.
		\item $K$ equals $|\Sigma|+|STp^{-}(G)|$, and $k$ equals $|\Sigma|$. Hereinafter we fix a bijection $ind:STp^-(G)\to \{k+1,\dotsc,K\}$ (in other words, $ind$ enumerates negative subtypes of $G$ by numbers from $k+1$ up to $K$).
		\item $\mathcal{P}_0$ consists of axioms $p(e_{ind(p)})$ for $p \in Pr \cap STp^-(G)$ (recall that $e_i$ is the $i$-th standard-basis vector in $\Nat^K$ of the form $(0,\dotsc,0,1,0,\dotsc,0)$ where $1$ stands at the $i$-th position).
		\item $\mathcal{P}_1$ consists of the following rules:
		\begin{enumerate}
			\item $C(x+e_{ind(A\cdot B)}) \leftarrow C(x+e_{ind(A)}+e_{ind(B)})$ for $A\cdot B \in STp^-(G)$, $C \in STp^+(G)$;
			\item $(A/B)(x) \leftarrow A(x+e_{ind(B)})$ for $A/B \in Q$;
			\item $A(x+e_{ind(B)}) \leftarrow (A/B)(x)$ for $A/B \in Q$;
			\item\label{rule_4d} $S(x+e_i) \leftarrow S(x+e_{ind(A)})$ for $A$ such that $a_i \triangleright A$.
		\end{enumerate}
		\item $\mathcal{P}_2$ consists of the following rules:
		\begin{enumerate}
			\item $(A \cdot B)(x+y) \leftarrow A(x),B(y)$ for $A\cdot B \in STp^+(G)$;
			\item $(C/(A/B))(x+y) \leftarrow (C/A)(x),B(y)$ for $A/ B \in STp^-(G)$, $C \in STp^+(G)$.
		\end{enumerate}
		\item $S$, which is a distinguished type in $G$, is also a distinguished state in the new grammar.
		\item $F = 6\cdot \max\limits_{A \in Tp(G)} |A|+1$.
	\end{enumerate}
\end{construction}

The main result regarding this construction is the following theorem:
\begin{theorem}\label{th_lp_to_bvass}
	Let $G = \langle \Sigma,S,\triangleright \rangle$ be an $\LP$-grammar. Then $\pi(L(G)) = L(\lBAM(G))$.
\end{theorem}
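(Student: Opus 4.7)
The plan is to establish a precise correspondence between cut-free $\LP$-derivations of sequents $\Gamma \to X$ (with $\Gamma \subseteq STp^-(G)$ and $X \in Q$, regarded as a type) and $\lBAM(G)$-derivations of the fact $X(v_\Gamma)$ that avoid the ``output'' rule 4(d). Here $v_\Gamma \in \Nat^K$ denotes the vector whose $ind(A)$-coordinate equals the multiplicity of $A$ in $\Gamma$ and whose alphabet coordinates $\{1,\dotsc,k\}$ vanish.

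\textbf{Main Lemma.} For every state $X\in Q$ and multiset $\Gamma\subseteq STp^-(G)$,
\[
\LP \vdash \Gamma \to X \iff X(v_\Gamma) \text{ is derivable in } \lBAM(G) \text{ without rule 4(d)},
\]
with derivation sizes linearly related. This is proved by a double induction on derivation length. After invoking cut-elimination in $\LP$ and reducing to primitive axioms $p\to p$, the rules correspond as follows: axiom $p\to p$ $\leftrightarrow$ axiom $p(e_{ind(p)})$; $(\to/)\leftrightarrow$ 4(b); $(\cdot\to)\leftrightarrow$ 4(a); $(\to\cdot)\leftrightarrow$ 5(a); the rule $(/\to)$ is simulated by the composite 4(b)--5(b)--4(c), using the fact that $C/B \in Q$ whenever $C \in STp^+(G)$ and $B \in STp^-(G)$; and $(\mathrm{perm})$ is absorbed because vector addition is commutative. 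In the reverse direction, rule 4(c) is handled via the admissibility of $(\to/)^{-1}$ in $\LP$, which is already noted in the preliminaries.

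\textbf{Deriving the theorem from the Main Lemma.} For $\pi(L(G)) \subseteq L(\lBAM(G))$, a multiset $w = a_{i_1},\dotsc,a_{i_n} \in L(G)$ is witnessed by types $T_j$ with $a_{i_j}\triangleright T_j$ and $\LP\vdash T_1,\dotsc,T_n\to S$; the Main Lemma then supplies a derivation of $S(v_\Gamma)$ (without 4(d)) for $\Gamma = T_1,\dotsc,T_n$, and appending $n$ applications of 4(d) (the $j$-th rewriting some $e_{ind(T_j)}$ into $e_{i_j}$) yields $S(\iota_K(\pi(w)))$. For the converse, observe that rule 4(d) is the only one that produces or consumes coordinates in $\{1,\dotsc,k\}$ and only ever acts on $S$-facts; hence any derivation of $S(\iota_K(v))$ can be rearranged so that its 4(d)-applications form a single final chain along the branch to the root. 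Stripping this chain yields a derivation of $S(u)$ (without 4(d)) for some $u$ supported on the memory coordinates, together with a choice of $T_j$ and $a_{i_j}\triangleright T_j$ at each peeled step. The Main Lemma then supplies an $\LP$-derivation $T_1,\dotsc,T_n\to S$, whence $a_{i_1},\dotsc,a_{i_n}\in L(G)$ with Parikh image $v$.

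\textbf{Size bound and main obstacle.} Set $M := \max_{A\in Tp(G)} |A|$. A cut-free $\LP$-derivation of $T_1,\dotsc,T_n\to S$ has at most $|T_1|+\dotsb+|T_n|+|S| \le (n+1)M$ non-$(\mathrm{perm})$ rule applications, since each such rule decomposes at least one connective when read from conclusion to premises. Each is inflated to at most three $\lBAM(G)$ rules, and the final $n$ applications of 4(d) add $n$ more; the total $3(n+1)M + n$ sits inside $F|w| = 7M\cdot n$ for $n,M \ge 1$, which matches the constant $F = 7M$ chosen in Construction \ref{construction_lp_to_lbvassam}. The principal technical difficulty is the Main Lemma, where one must verify that the auxiliary states $C/A \in Q \setminus STp^+(G)$, which exist solely to implement the 4(b)--5(b)--4(c) simulation of $(/\to)$, cannot be abused to yield spurious $\lBAM(G)$-derivations, and that the 4(d)-rearrangement in the converse direction is always sound. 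A secondary bookkeeping point is ensuring that the primitive $\LP$-axioms $p\to p$ arising in cut-free derivations of the relevant sequents match exactly the axiom schema $p(e_{ind(p)})$ for $p\in Pr\cap STp^-(G)$, which follows from the polarization observation stated right after Definition \ref{def_lp-grammar}.
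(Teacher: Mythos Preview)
Your proposal is correct and follows essentially the same approach as the paper: your Main Lemma is the paper's Lemma~\ref{lemma_main_lp_to_bvass}, your rearrangement of rule~4(d) to the end is Lemma~\ref{lemma_move_rule_to_end}, and your size accounting matches the paper's (indeed, you are slightly more careful in separating the $3(n+1)M$ contribution from the final $n$ applications of 4(d)). The only cosmetic difference is that you state the Main Lemma for all $X\in Q$ rather than only $X\in STp^+(G)$; this is a harmless and arguably cleaner generalization, since the induction in the ``if'' direction must anyway pass through the auxiliary states $C/A$.
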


In the below lemmas we use the same notation as in Construction \ref{construction_lp_to_lbvassam}.

Given a vector $u \in \Nat^K$ such that $u_1 = \dotsc = u_k = 0$, we denote by $\mathcal{A}(u)$ the multiset of types $A_1,\dotsc,A_{|u|}$ such that all $A_i \in STp^-(G)$ and for each $A \in STp^-(G)$ it holds that $|\{i \mid A_i = A\}| = u_{ind(A)}$.
\begin{lemma}\label{lemma_main_lp_to_bvass}
	Let $C \in STp^+(G)$ and let $u \in \Nat^K$ such that $u_1 = \dotsc = u_k = 0$. Then $\LP \vdash \mathcal{A}(u) \to C$ if and only if $C(u)$ has a derivation in $\lBAM(G)$. Besides, in such a case there exists a derivation of $C(u)$ of a size not greater than $3|\mathcal{A}(u) \to C|$.
\end{lemma}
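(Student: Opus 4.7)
The plan is to prove a statement slightly stronger than the lemma: for every state $q \in Q$ and every vector $u \in \Nat^K$ with $u_1 = \dotsc = u_k = 0$, the fact $q(u)$ is derivable in $\lBAM(G)$ iff $\LP \vdash \Gamma_q \to C_q$, where $\Gamma_q = \mathcal{A}(u)$ and $C_q = q$ when $q \in STp^+(G)$, and $\Gamma_q = \mathcal{A}(u), A$ and $C_q = C$ when $q = C/A$ is a virtual state. The lemma itself is the $q = C \in STp^+(G)$ case. By cut-elimination (stated earlier in the paper) and by the standard reduction of general axioms $A \to A$ to primitive ones, I may assume that every $\LP$ derivation is cut-free with axioms only of the form $p \to p$, $p \in Pr$; the subformula property then guarantees that every such $p$ lies in $Pr \cap STp^+(G) \cap STp^-(G)$, so $p \in Q$ and $p(e_{ind(p)}) \in \mathcal{P}_0$.

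For the ``if'' direction I induct on the cut-free $\LP$ derivation, translating each rule into a recipe of lBVASSAM steps. The axiom $p \to p$ maps to the axiom $p(e_{ind(p)})$; the rule $(\to/)$ maps to rule 4(b); the rule $(\to\cdot)$ maps to the binary rule 5(a); the rule $(\cdot\to)$ maps to rule 4(a); and $(\mathrm{perm})$ is invisible since antecedents are already multisets. The only rule requiring three lBVASSAM steps is $(/\to)$: from premises $\mathcal{A}(v) \to A$ and $\mathcal{A}(w), B \to C$ I apply rule 4(b) to the IH-witness $C(w + e_{ind(B)})$ to obtain $(C/B)(w)$, then the binary rule 5(b) to combine $(C/B)(w)$ with the IH-witness $A(v)$ into $(C/(B/A))(w+v)$, and finally rule 4(c) to reach $C(w + v + e_{ind(B/A)})$.

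For the ``only if'' direction I induct on the size of the lBVASSAM derivation of $q(u)$, case-splitting on the last rule. Rule 4(d) cannot occur by the hypothesis on $u$, since it monotonically adds $e_i$ for some $i \le k$. Each remaining rule inverts a case of the ``if'' direction via the strengthened IH: rule 4(a) yields $(\cdot\to)$; rule 4(b) yields $(\to/)$, with both sub-clauses of the IH (positive state and virtual state) producing the correct sequent; rule 4(c) is a pure relabeling via $\mathcal{A}(x+e_{ind(B)}) = \mathcal{A}(x), B$; rule 5(a) yields $(\to\cdot)$; and the key case, rule 5(b), yields $(/\to)$, where the virtual-state clause of the IH is essential to read the premise $(C/A)(x)$ as $\LP \vdash \mathcal{A}(x), A \to C$.

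The size bound is carried as an invariant through the induction: each $\LP$ rule raises the sequent length by exactly one connective, while the lBVASSAM side raises by at most three steps (achieved only by $(/\to)$); the base case satisfies $1 \le 3 \cdot 2$, and a per-case check verifies that $(/\to)$ saturates the bound while all other rules leave slack. The main obstacle I anticipate is the combinatorial bookkeeping around the virtual states: once the inductive hypothesis is phrased with its two clauses, the correspondence between rule 5(b) and $(/\to)$ falls out cleanly and the rest of the simulation is essentially mechanical, but without the virtual-state clause case 5(b) cannot be inverted at all.
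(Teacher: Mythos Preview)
Your proof is correct and follows essentially the same strategy as the paper: induction on the $\LP$ derivation in one direction and on the lBVASSAM derivation in the other, with $(/\to)$ translated by the three-step sequence 4(b)--5(b)--4(c). Two minor remarks. First, you have the labels ``if'' and ``only if'' reversed relative to the lemma's phrasing: the $\LP$-to-lBVASSAM direction (the one carrying the size bound) is the ``only if'' direction. Second, your explicit two-clause induction hypothesis (positive state versus virtual state $C/A$) is a cleaner formulation of what the paper leaves implicit---the paper simply applies the IH to facts such as $(D/A)(u_1)$ reading the state as the type $D/A$, which amounts to the same thing by invertibility of $(\to/)$. Your handling of rule 5(b) via $(/\to)$ on the unfolded premise $\mathcal{A}(x),A\to C$ is slightly more direct than the paper's, which instead uses a cut against the auxiliary sequent $(D/A)\cdot B \to D/(A/B)$; both routes are valid.
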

\begin{proof}
	We start with the ``if'' direction. It is proved by induction on the length of a derivation, and the proof is straightforward: the derivation of $C(u)$ can be directly represented as a derivation in $LP$. Let us consider two not so trivial cases of the possible last rule application in the derivation of $C(u)$:
	
	\textbf{Case 1.} The last rule applied is $C(x+e_{ind(B)}) \leftarrow (C/B)(x)$, i.e. $u = u^\prime+e_{ind(B)}$ and $(C/B)(u^\prime)$ is derivable. By the induction hypothesis, the sequent $\mathcal{A}(u^\prime) \to C/B$ is derivable. Using $(\to /)^{-1}$ we obtain that $\LP \vdash \mathcal{A}(u^\prime), B \to C$. It remains to observe that $\mathcal{A}(u^\prime), B = \mathcal{A}(u)$.
	
	\textbf{Case 2.} The last rule applied is $(D/(A/B))(x+y) \leftarrow (D/A)(x),B(y)$, i.e. $C = D/(A/B)$ and $u = u_1+u_2$ for $(D/A)(u_1)$ and $B(u_2)$ being derivable. By the induction hypothesis, sequents $\mathcal{A}(u_1) \to D/A$ and $\mathcal{A}(u_2) \to B$ are derivable. The sequent $\mathcal{A}(u_1),\mathcal{A}(u_2) \to C$ can be derived as follows:
	$$
	\infer[(\mathrm{cut})]{
		\mathcal{A}(u_1),\mathcal{A}(u_2) \to D/(A/B)
	}{	
		\infer[(\to\cdot)]{
			\mathcal{A}(u_1), \mathcal{A}(u_2) \to (D/A)\cdot B
		}{	
			\mathcal{A}(u_1) \to D/A & \mathcal{A}(u_2) \to B
		}
		&
		(D/A)\cdot B \to D/(A/B)
	}
	$$
	It is straightforward to check that $\LP\vdash (D/A)\cdot B \to D/(A/B)$.
	
	The remaining rules are easier to consider because they do not require using the cut rule or invertibility of $(\cdot \to)$ and $(\to /)$. Note that the rule \ref{rule_4d} cannot be applied in the derivation of $C(u)$ because $u_1=\dotsc=u_k=0$.
	
	The other direction is more interesting because we also need to control the size of a derivation. The proof, however, is still straightforward and it is still proved by induction on the length of a derivation (in $\LP$). The base case is where $\mathcal{A}(u) \to C$ is an axiom (say, it is of the form $p \to p$). Then $p(e_{ind(p)})$ is an axiom in $\lBAM(G)$, and hence $u = e_{ind(p)}$ has a derivation of the size $1$.
	
	Now let us prove the induction step in a usual manner by considering possible last rule applications. The only interesting case is the one when $(/\to)$ is applied:
	$$
	\infer[]{
		\Gamma,A/B,\Psi,\Delta \to C
	}
	{
		\Gamma,A,\Delta \to C
		&
		\Psi \to B
	}
	$$
	Here $\Gamma,A/B,\Psi,\Delta = \mathcal{A}(u)$. There exist $u_1$ and $u_2$ such that $\Gamma,A,\Delta = \mathcal{A}(u_1)$ and $\Psi = \mathcal{A}(u_2)$; moreover, we know that $u = u_1+u_2-e_{ind(A)}+e_{ind(A/B)}$. Using the induction hypothesis we conclude that $C(u_1)$ and $B(u_2)$ have derivations of sizes not greater than $3|\mathcal{A}(u_1) \to C|$ and $3|\mathcal{A}(u_2) \to B|$ resp. Finally, we do the following three steps in the derivation:
	\begin{enumerate}
		\item $(C/A)(u_1-e_{ind(A)})\leftarrow C(u_1)$;
		\item $(C/(A/B))(u_1+u_2-e_{ind(A)})\leftarrow (C/A)(u_1-e_{ind(A)}),B(u_2)$;
		\item $C(u_1+u_2-e_{ind(A)}+e_{ind(A/B)}) \leftarrow (C/(A/B))(u_1+u_2-e_{ind(A)})$.
	\end{enumerate}
	This is a derivation of $C(u)$ of the size not greater than $3|\mathcal{A}(u_1) \to C| + 3|\mathcal{A}(u_2) \to B| + 3 = 3|\mathcal{A}(u) \to C|$.
\end{proof}
\begin{lemma}\label{lemma_move_rule_to_end}
	Let $S(u)$ be derivable in $\lBAM(G)$. If there is a step in its derivation of the form $S(v+e_i) \leftarrow S(v+e_{ind(A)})$ for $A$ such that $a_i \triangleright A$, then we can make this step the last one in the derivation.
\end{lemma}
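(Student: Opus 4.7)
The plan is to show that the given step can be commuted past every subsequent rule in the derivation, so that an equivalent derivation is obtained in which this step appears last. The key structural observation is that among all rules of Construction \ref{construction_lp_to_lbvassam}, only rules of the form $S(x+e_i) \leftarrow S(x+e_{ind(A)})$ (with $a_i \triangleright A$) ever write to a coordinate in $\{1,\dotsc,k\}$; all other rules only touch coordinates $ind(B)$ with $B \in STp^{-}(G)$, which are strictly greater than $k$, and binary rules simply split coordinates additively. Moreover, at coordinate $i$ the only modification anywhere in the calculus is an increment, so once the $e_i$ is introduced by our step it propagates up the derivation tree and every fact on the path from the conclusion $S(v+e_i)$ of our step to the root $S(u)$ has its $i$-th coordinate $\geq 1$.

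Given the derivation tree $\mathcal{T}$ of $S(u)$, I construct a modified tree $\mathcal{T}'$ as follows: delete the given step so that its parent rule takes $S(v+e_{ind(A)})$ directly as a premise, and at every node along the aforementioned path alter the vector by subtracting $e_i$ and adding $e_{ind(A)}$; sibling subtrees branching off this path remain untouched. Each rule application on the path still fires: non-negativity is preserved because the $i$-th coordinate was at least $1$ in $\mathcal{T}$ and the $ind(A)$-th coordinate only grows; and any rule that inspects coordinate $ind(A)$ — say a later instance of our rule with the same $A$, or an instance of (a) or (b) of $\mathcal{P}_1$ whose negative subtype involves $A$ — still has enough mass there, since we added to that coordinate rather than subtracting from it. The root of $\mathcal{T}'$ then carries $S(u - e_i + e_{ind(A)})$. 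Appending one instance of $S(u) \leftarrow S(u - e_i + e_{ind(A)})$ on top (valid since $u_i \geq 1$) produces a derivation of $S(u)$ in which the given step is the last one, with the same total number of steps as in $\mathcal{T}$.

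The main subtlety concerns binary-rule steps along the path. At such a step, say with rule $q(x+y) \leftarrow q_1(x), q_2(y)$ and $q_1(x)$ being the path premise, the new split is $q_1(x - e_i + e_{ind(A)})$ on the path side and $q_2(y)$ unchanged on the sibling side. This split is legal because in $\mathcal{T}$ we already have $x_i \geq 1$: the $e_i$ from our step is carried through the path premise and never enters the sibling. Hence $x - e_i + e_{ind(A)}$ is non-negative, and the two parts sum to $(x+y) - e_i + e_{ind(A)}$, matching the modified parent vector. With this observation in hand, the construction of $\mathcal{T}'$ reduces to a straightforward traversal of the path, and the resulting derivation has the required property.
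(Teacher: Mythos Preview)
Your proof is correct and follows essentially the same approach as the paper's: delete the step, propagate the substitution $-e_i+e_{ind(A)}$ along the path to the root (which remains valid because no rule decreases the $i$-th coordinate for $i\le k$, and adding to coordinate $ind(A)$ never hurts), and reapply the rule at the very end. Your treatment is in fact more explicit than the paper's terse argument, in particular regarding binary rules and the non-negativity check, but the underlying idea is identical.
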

\begin{proof}
	If we do not apply this rule in its original place, then a vector $v+e_{ind(A)}$ appears instead of $v+e_i$ in further steps of the derivation. This does not make the derivation incorrect because there are no rules decreasing the value of the $i$-th component (for $i \le k$). Hence, at the end of the new derivation we obtain $S(u-e_i+e_{ind(A)})$. Applying the rule $S(x+e_i) \leftarrow S(x+e_{ind(A)})$ we complete the derivation.
\end{proof}

Theorem \ref{th_lp_to_bvass} is proved as follows using the above lemmas:

\begin{proof}[Proof (of Theorem \ref{th_lp_to_bvass})]
	Let us prove that, if $u \in L(\lBAM(G))$, then $\pi^{-1}(u) \in L(G)$. A vector $u \in \Nat^k$ belongs to $L(\lBAM(G))$ if and only if $S(v)$ has a derivation of the size not greater than $F|u|$ for $v = \iota_K(u)$. Applying Lemma \ref{lemma_move_rule_to_end} we can assume that rules of the form \ref{rule_4d} are applied after all the remaining rules in the derivation of $S(v)$. In other words, there exists a vector $v^\prime$ such that $v^\prime_1 = \dotsc = v^\prime_k = 0$ and such that the derivation of $S(v)$ is decomposed into two parts:
	\begin{enumerate}
		\item We derive $S(v^\prime)$ without using rules of the form \ref{rule_4d};
		\item We derive $S(v)$ from $S(v^\prime)$ using only rules \ref{rule_4d}. Let us explicitly denote all the steps of this part of the derivation as follows: at the $j$-th step the rule $S(x+e_{i_j}) \leftarrow S(x+e_{ind(A_j)})$ is applied where $a_{i_j} \triangleright A_j$ and $j = 1,\dotsc,|v^\prime|=|v|$.
	\end{enumerate}
	Then proving the fact that $\pi^{-1}(u) \in L(G)$ is straightforward: if we replace each symbol $a_{i_j}$ by a corresponding type $A_j$, then we obtain a multiset of types $A_1,\dotsc,A_{|v|} = \mathcal{A}(v^{\prime})$. Lemma \ref{lemma_main_lp_to_bvass} yields that $\mathcal{A}(v^\prime) \to S$ is derivable, which completes the proof.


	Conversely, let us prove that, if $\pi^{-1}(u) \in L(G)$, then $u \in L(\lBAM(G))$. The former fact implies that there exist such $A_1,\dotsc,A_{|u|}$ that $a_{i_j} \triangleright A_j$ ($j = 1, \dotsc, |u|$) where $\pi^{-1}(u) = a_{i_1},\dotsc, a_{i_{|u|}}$ and such that $\LP \vdash A_1, \dotsc, A_{|u|} \to S$. There exists a vector $\widetilde{v}$ such that $\mathcal{A}(\widetilde{v}) = A_1, \dotsc, A_{|u|}$. Using the ``only if'' direction of Lemma \ref{lemma_main_lp_to_bvass} we conclude that $S(\widetilde{v})$ has a derivation in $\lBAM(G)$ of the size not greater than $3\left|A_1, \dotsc, A_{|u|} \to S\right| = 3\left(\left|A_1\right|+\dotsc+\left|A_{|u|}\right| +\left|S\right| \right) \le 3 \cdot \max\limits_{A \in Tp(G)} |A|\cdot \left(|u|+1\right) \le \left( 6\cdot \max\limits_{A \in Tp(G)} |A| \right) |u|$. Finally, we apply rules of the form \ref{rule_4d} to $S(\widetilde{v})$ $|u|$ times in such a way that the resulting fact is $S(\pi^{-1}(u))$. This finishes the proof.
\end{proof}
Theorems \ref{th_bvass_to_lp} and \ref{th_lp_to_bvass} together can be formulated as follows:
\begin{theorem}\label{th_main}
	A language is generated by an $\LP$-grammar if and only if its Parikh image is generated by a linearly-restricted branching vector addition system with states and additional memory.
\end{theorem}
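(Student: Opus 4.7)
The plan is to treat this final statement as a direct corollary of Theorems \ref{th_bvass_to_lp} and \ref{th_lp_to_bvass}, which have already done all the heavy lifting. Essentially, the two transformations $\lBAM(\cdot)$ and $\LPG(\cdot)$ introduced in Sections \ref{ssec_lp_to_lbvassam} and \ref{ssec_lbvassam_to_lp} respectively give the two directions, and the only remaining ingredient is a small bookkeeping observation about the Parikh map.

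For the ``only if'' direction, suppose $L$ is generated by an $\LP$-grammar $G$. Then Construction \ref{construction_lp_to_lbvassam} produces an lBVASSAM $\lBAM(G)$, and Theorem \ref{th_lp_to_bvass} asserts $\pi(L) = \pi(L(G)) = L(\lBAM(G))$. So $\pi(L)$ is generated by an lBVASSAM, namely $\lBAM(G)$ itself. This direction requires no further work.

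For the ``if'' direction, suppose $\pi(L)$ is generated by an lBVASSAM $H$. Construction \ref{construction_lbvassam_to_lp} then yields an $\LP$-grammar $\LPG(H)$, and Theorem \ref{th_bvass_to_lp} gives $L(H) = \pi(L(\LPG(H)))$. Combining these, $\pi(L(\LPG(H))) = L(H) = \pi(L)$. Here I need the small observation that the Parikh map $\pi \colon \mathcal{M}(\Sigma) \to \Nat^{|\Sigma|}$ is injective (it is in fact a bijection onto $\Nat^{|\Sigma|}\setminus\{0\}$, since a multiset is determined by the vector of multiplicities of its elements), so that equality of Parikh images of languages $\pi(L_1) = \pi(L_2)$ with $L_1, L_2 \subseteq \mathcal{M}(\Sigma)$ entails $L_1 = L_2$. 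Applying this to $L(\LPG(H))$ and $L$ gives $L(\LPG(H)) = L$, witnessing that $L$ is generated by an $\LP$-grammar.

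There is no real obstacle here; the only thing to be pedantic about is that Theorem \ref{th_bvass_to_lp} is stated at the level of Parikh images, whereas the final theorem is formulated in terms of the multiset language itself, so one must explicitly invoke the injectivity of $\pi$ on $\mathcal{M}(\Sigma)$ to upgrade the Parikh-level equality to a language-level equality. All substantive content lives in the two constructions and their correctness lemmas already established in the preceding subsections.
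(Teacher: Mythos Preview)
Your proposal is correct and matches the paper's approach exactly: the paper itself presents this theorem without a written proof, simply stating that it is obtained by combining Theorems~\ref{th_bvass_to_lp} and~\ref{th_lp_to_bvass}. Your added observation about the injectivity of $\pi$ on $\mathcal{M}(\Sigma)$ is the only point the paper leaves implicit, and it is the right bookkeeping step to complete the ``if'' direction.
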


This allows us to say that $\LP$-grammars and lBVASSAM are equivalent.

\subsection{Corollaries}
Several corollaries of interest follow from Theorems \ref{th_bvass_to_lp} and \ref{th_lp_to_bvass}.
\begin{corollary}\label{cor_quadratic_language}
	There exists an $\LP$-grammar generating the language of multisets $$\{a\times l, b \times n \mid 0<n, 0 \le l \le n^2 \}.$$
\end{corollary}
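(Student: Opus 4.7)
The plan is to apply the equivalence proved in this section. Since the Parikh image of $\{a\times l,\,b\times n\mid 0<n,\ 0\le l\le n^2\}$ (under the enumeration $a_1=a$, $a_2=b$) is exactly $V:=\{(l,n)\in\Nat^2\mid 0<n,\ 0\le l\le n^2\}$, Theorem~\ref{th_bvass_to_lp} reduces the corollary to constructing an lBVASSAM $G$ with $L(G)=V$; the required $\LP$-grammar is then $\LPG(G)$ produced by Construction~\ref{construction_lbvassam_to_lp}.

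To construct $G$, I would use extra dimensions as auxiliary memory and arrange the rules so that BVASS branching creates derivation trees of quadratic size. A natural two-stage layout is the following. In stage one, an axiom $s(0,1,1,\ldots)$ together with a binary combining rule $s(x+y)\leftarrow s(x),s(y)$ produces facts $s(0,n,n,\ldots)$ representing $n$ $b$-tokens together with $n$ auxiliary ``capacity'' tokens. In stage two, a binary combining rule crosses two independently-derived stage-one configurations so that the $n$ capacity tokens in one can be paired against the $n$ in the other, yielding up to $n\cdot n=n^2$ slots for $a$-axiom sub-derivations; unused capacity is then discarded by unary rules so that every auxiliary coordinate ends at zero. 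The linear constant $C$ is taken large enough to admit derivations of size $O(n^2)$, which is automatic since any $(l,n)\in V$ has size $l+n=O(n^2)$.

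The chief difficulty is making the rules enforce the bound $l\le n^2$ exactly. Completeness, namely derivability of every $(l,n)\in V$, is obtained by exhibiting an explicit derivation tree of the symmetric ``tensor'' shape above and then discarding $n^2-l$ unused capacity tokens via unary rules. Soundness, namely non-derivability of any $(l,n)$ with $l>n^2$, requires an invariant preserved by every rule of $G$: in any derivable fact, the $a$-coordinate is bounded by a nonnegative combination of capacity tokens consumed along the derivation, which in turn never exceeds $n^2$ whenever all auxiliary coordinates vanish. I expect this soundness invariant, verified by induction on the derivation tree and in particular on the binary cross-product rule where the quadratic blow-up is legitimated, to be the technically subtlest part of the argument. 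Once the lBVASSAM $G$ has been fixed and both directions verified, the transition from $G$ to $\LPG(G)$ is handled mechanically by Construction~\ref{construction_lbvassam_to_lp}, so no further work is needed.
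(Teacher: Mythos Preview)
Your plan has a genuine gap at its core: the ``crossing'' step. In a BVASS, every binary rule has the shape $p(x+y)\leftarrow q(x),r(y)$, so combining two stage-one facts carrying $n$ capacity tokens each yields $2n$ tokens, not $n\cdot n$. Nothing in the BVASS formalism multiplies vector entries, and you have not said what rule shape would turn two copies of $n$ into $n^2$ admissible $a$-slots. Without that mechanism the soundness invariant you allude to has nothing to latch onto.

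There is a second, independent problem with the linear budget. You argue that $C(l+n)=O(n^2)$, but this is only true when $l$ is close to $n^2$. For the pair $(l,n)=(0,n)$ the budget is $Cn$, so any scheme that manufactures $\Theta(n^2)$ capacity tokens and then discards the unused ones by unary rules will spend $\Theta(n^2)$ steps and overshoot the bound. Hence the construction must produce, for each fixed $n$, derivations whose size scales with $l$, not with $n^2$.

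The paper solves both issues quite differently from your sketch: it uses \emph{no} binary rules at all. The extra coordinates encode control tokens $\alpha,\omega,\beta$ (acting as phase markers) together with a loop counter $\rho$ and a swap register $\sigma$. First $n$ copies of $b$ and of $\rho$ are created; then a loop runs once per $\rho$, and in each pass at most $n$ of the $b$'s are temporarily traded for $a$'s via $\sigma$ and restored. The zero-test on the auxiliary coordinates forces exactly $n$ passes and empties $\sigma$, so the total number of $a$'s is at most $n\cdot n$. Crucially, each $a$ produced costs a constant number of rule applications, so the whole derivation has size $3n+2l+1\le 4(l+n)$, matching the linear restriction for every $(l,n)$, including $l=0$. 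The quadratic bound thus comes from iteration (an $n$-fold loop each contributing $\le n$), not from branching.
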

\begin{proof}
	We present an lBVASSAM $G = \langle Q, \mathcal{P}_0,\mathcal{P}_1,\mathcal{P}_2,s,k,K\rangle \vert_{C}$ generating the set $\{(l,n) \mid 0<n, 0 \le l \le n^2\}$:
	\begin{enumerate}
		\item $K = 7$, $k=2$. For the sake of convenience we rename the standard-basis vectors as follows: $a \eqdef e_1$, $b \eqdef e_2$, $\alpha \eqdef e_3$, $\omega \eqdef e_4$, $\beta \eqdef e_5$, $\rho \eqdef e_6$, $\sigma \eqdef e_7$. 
		\item $Q = \{s\}$.
		\item $\mathcal{P}_0$ includes only $s(\alpha)$.
		\item $\mathcal{P}_1$ consists of unary rules 
		\begin{enumerate}
			\item\label{rule1} $s(x+\alpha+b+\rho) \leftarrow s(x+\alpha)$,
			\item\label{rule2} $s(x+\omega+b+\rho) \leftarrow s(x+\alpha)$,
			\item\label{rule3} $s(x+\omega+\sigma+a) \leftarrow s(x+\omega+b)$,
			\item\label{rule4} $s(x+\beta) \leftarrow s(x+\omega+\rho)$,
			\item\label{rule5} $s(x+\beta+b) \leftarrow s(x+\beta+\sigma)$,
			\item\label{rule6} $s(x+\omega) \leftarrow s(x+\beta)$,
			\item\label{rule7} $s(x) \leftarrow s(x+\omega)$. 
		\end{enumerate}
		\item $\mathcal{P}_2$  is empty.
		\item $C = 4$.
	\end{enumerate}
	Let us say that a derivation in $G$ is \emph{completely typical} if it is of the following form for some $n> 0$:
	\begin{enumerate}
		\item We start with $s(\alpha)$ and apply the rule \ref{rule1} $(n-1)$ times. The result is $$s(\alpha+(n-1)(b+\rho)).$$
		\item We apply \ref{rule2}.
		The result is $$s(\omega+nb+n\rho).$$
		\item The following steps are done for $i=1,\dotsc,n$. Let $x_1 \eqdef 0, y_1 \eqdef n$; then $s(\omega+nb+n\rho) = s(\omega+x_i a+y_i b + (n-y_i)\sigma + (n+1-i)\rho)$.
		\begin{enumerate}
			\item At the beginning of each iteration we have a fact of the form $s(\omega+x_i a+y_i b + (n-y_i)\sigma + (n+1-i)\rho)$.
			\item We apply the rule \ref{rule3} $l_i \le y_i$ times.
			The result is $$s(\omega + (x_i+l_i) a + (y_i-l_i) b+ (n-y_i+l_i)\sigma +(n+1-i)\rho).$$
			\item We apply the rule \ref{rule4}.
			The result is $$s(\beta + (x_i+l_i) a + (y_i-l_i) b+ (n-y_i+l_i)\sigma + (n-i)\rho).$$
			\item We apply the rule \ref{rule5} $l^\prime_i \le n-y_i+l_i$ times.
			The result is $$s(\beta + (x_i+l_i) a + (y_i-l_i+l^\prime_i) b+ (n-y_i+l_i-l^\prime_i)\sigma + (n-i)\rho).$$
			\item We apply the rule \ref{rule6}.
			The result is $$s(\omega + (x_i+l_i) a + (y_i-l_i+l^\prime_i) b+ (n-y_i+l_i-l^\prime_i)\sigma + (n-i)\rho).$$
			This is the last step of the iteration, so $x_{i+1} \eqdef x_i+l_i$, $y_{i+1} \eqdef y_i-l_i+l^\prime_i$.
		\end{enumerate}
		After all these steps being completed we obtain the fact $s(\omega+x_{n+1} a+y_{n+1} b + (n-y_{n+1})\sigma)$.
		\item The rule \ref{rule7} is applied. The result is
		$$s(x_{n+1} a+y_{n+1} b + (n-y_{n+1})\sigma).$$
	\end{enumerate}

	We say that a derivation is \emph{typical} if it is a beginning part of a completely typical derivation. Our claim is that each derivation in $G$ is typical. This is straightforwardly proved by induction on the length of a derivation; the proof is a simple consideration of which rule can be next at each step of a completely typical derivation.
		
	Let $(l,n) \in L(G)$; equivalently, $s(la+nb)$ has a derivation in $G$ of the size less than or equal to $4(n+l)$. This derivation is typical; in fact, it must be completely typical (since both $\omega$ and $\beta$ disappear only at the last step of a completely typical derivation). Hence $l = x_{n+1} = l_1+\dotsc+l_n \le y_1+\dotsc+y_n \le n\cdot n = n^2$ as desired.
	
	Conversely, if $l \le n^2$, then $s(la+nb)$ has a derivation in $G$ of the size $\le 4(n+l)$. Indeed, a derivation of interest is the completely typical one with parameters $l_i = n$ (for $i \le \lfloor l/n \rfloor$), $l_i = l-\lfloor l/n \rfloor n$ (for $i = \lfloor l/n \rfloor + 1$), $l_i = 0$ (for $i > \lfloor l/n \rfloor + 1$); $l^\prime_i = l_i$ (for all $i$). The size of this derivation can be computed by summing the number of rule applications at each stage: it equals $(n-1) + 1 + 2 l + 2 n + 1 = 3n + 2l + 1 \le 4(n+l)$.
\end{proof}
This corollary gives a negative answer to the question of whether $\LP$-grammars generate only permutation closures of context-free languages; indeed, the language $\textsc{perm}\left(\{a^l b^n \mid 0< n, 0 \le l \le n^2\}\right)$ where $\textsc{perm}(L) = \{a_{\sigma(1)}\dotsc a_{\sigma(m)} \mid m>0, a_1\dotsc a_m \in L, \sigma \in S_m\}$ is not a permutation closure of any context-free language (equivalently, it is not a permutation closure of a regular language). This can be proved by, e.g., using Theorem 4 from \cite{DomosiK99} (an iteration lemma for regular languages).

For the Lambek calculus the equivalence of Lambek grammars and context-free grammars allows one to show that Lambek grammars that use only types of a very simple form (namely, either $p$, $p/q$, or $p/q/r$ for $p,q,r \in Pr$) are equivalent to all Lambek grammars; this result can be considered as a normal form for Lambek grammars. In the case of $\LP$-grammars, Constructions \ref{construction_lbvassam_to_lp} and \ref{construction_lp_to_lbvassam} also enable one to establish a normal form for $\LP$-grammars: namely, for each $\LP$-grammar $G$ the grammar $\LPG(\lBAM(G))$ is equivalent to $G$, and it uses only types of a specific form. What interesting result can be proved on the basis of the equivalence of $G$ and $\LPG(\lBAM(G))$? For example, we can slightly modify Construction \ref{construction_lbvassam_to_lp} to prove e.g. the following corollary:
\begin{corollary}
	$\LP$-grammars are equivalent to $\LP(/)$-grammars.
\end{corollary}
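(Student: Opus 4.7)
The plan is to reuse Theorems \ref{th_bvass_to_lp} and \ref{th_lp_to_bvass}: given an $\LP$-grammar $G$, first convert it to the equivalent lBVASSAM $\lBAM(G)$ via Construction \ref{construction_lp_to_lbvassam}, then apply a modified version of Construction \ref{construction_lbvassam_to_lp} whose output lives in $\LP(/)$. The modification uses a continuation-passing encoding: a product $A\cdot B$ in a positive position is replaced by a type with an additional ``return continuation'' to $f$, which turns the product into a pair of nested divisions. Concretely, the standard equivalence $A/(B\cdot C)\dashv\vdash A/C/B$ lets me unroll every denominator of the form $g^\nu$ into a chain of divisions $/g_{i_1}/\ldots/g_{i_m}$, so the axiom type $T(\varphi_0)=f/g^\nu/q$ becomes product-free automatically (while the binary type $T(\varphi_2)=f/(f/r)/(f/p)/q$ is product-free already). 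The only remaining product sits in the numerator of the unary type $T(\varphi_1)=(p\cdot g^{\delta_1})/g^{\delta_2}/q$; I replace it with $T'(\varphi_1)\eqdef f/(f/p/g^{\delta_1})/g^{\delta_2}/q$, where the exponent notation again abbreviates a chain of divisions. Finally, the product-based assignment $a_i\triangleright g_i\cdot T(\varphi_{j_1})\cdot\ldots\cdot T(\varphi_{j_p})$ is replaced by $a_i\triangleright f/(f/g_i/T'(\varphi_{j_1})/\ldots/T'(\varphi_{j_p}))$ for each $0\le p\le C$; all such assignment types belong to $\LP(/)$.

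Soundness is the easier direction. Starting from $A'_1,\ldots,A'_m\to f/s$, I apply $(\to/)^{-1}$ and then, on each $A'_i=f/X_i$, an instance of $(/\to)$ whose right premise is the axiom $f\to f$ (so $\Gamma=\emptyset$) and whose left premise is $\Pi\to X_i$; following this by several $(\to/)$-applications deposits $g_i$ together with $T'(\varphi_{j_1}),\ldots,T'(\varphi_{j_p})$ into the antecedent. Iterating across all symbols yields the expanded sequent $g\times v,T'(\varphi_1),\ldots,T'(\varphi_n),s\to f$, which an analogue of Lemma \ref{lemma_main_bvass_to_lp} links to an $\lBAM(G)$-derivation of $s(v)$. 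Inside that main lemma, the new unary type is handled by the same CPS trick: the subproof $\Pi\to f/p/g^{\delta_1}$ produced by $(/\to)$ on the inner $f/(\ldots)$, followed by $(\to/)$, deposits $p$ and $g^{\delta_1}$ into the antecedent, exactly reproducing the effect that $(\cdot\to)$ had in Case 2 of the original proof.

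The main obstacle is the converse direction: any $\LP(/)$-derivation of $A'_1,\ldots,A'_m\to f/s$ must be forced into the shape just described. For this I would re-prove the analogues of Lemmas \ref{lemma_wolf_q} and \ref{lemma_wolf_e_general} in the modified grammar, observing that no subtype of the new grammar has $q\in Q$ or $g_i$ on top of a division, so $\to q$ and $\to g_i$ admit only the corresponding atomic axioms. These rigidity results force every $(/\to)$ applied at the outermost level of a rule or assignment type to match $q\to q$, $g_i\to g_i$, or $f\to f$ on the right premise, so the whole derivation unfolds deterministically into the expanded sequent, and then the inductive analogue of Lemma \ref{lemma_main_bvass_to_lp} (together with the argument of Theorem \ref{th_bvass_to_lp}) yields an $\lBAM(G)$-derivation. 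Composing with Theorem \ref{th_lp_to_bvass} gives an $\LP(/)$-grammar equivalent to $G$; the reverse inclusion is trivial since $\LP(/)\subseteq\LP$, and the corollary follows.
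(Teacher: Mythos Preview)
Your proposal is correct and follows essentially the same route as the paper: pass through $\lBAM(G)$, keep $T(\varphi_0)$ and $T(\varphi_2)$ unchanged, replace the unary-rule type by the CPS form $f/(f/p/g^{\delta_1})/g^{\delta_2}/q$, and then unroll every $A/(B\cdot C)$ into $A/C/B$. The only difference is in the encoding of the lexical assignment: the paper uses $f/(f/(s\cdot A))/s$ (so a single $s$ token is consumed by the trailing $/s$ and re-deposited by the inner $s\cdot A$, threading sequentially through the $A'_i$'s), whereas you use $f/(f/g_i/T'(\varphi_{j_1})/\ldots/T'(\varphi_{j_p}))$ and rely directly on the CPS shape $f/X_i$ to swallow the remaining context into $\Pi\to X_i$; both mechanisms reduce the grammar sequent to the same expanded sequent $g\times v,\,T'(\varphi_1),\ldots,T'(\varphi_n),\,s\to f$, after which your analogue of Lemma~\ref{lemma_main_bvass_to_lp} and the paper's coincide.
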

\begin{proof}
	The modification of Construction \ref{construction_lbvassam_to_lp} is as follows: 
	\begin{enumerate}
		\item We define a function $T^\prime$ in the same way as $T$ on $\mathcal{P}_0$ and $\mathcal{P}_2$ but we change the definition for $\mathcal{P}_1$: $T^\prime(\varphi_1) \eqdef (f/(f/(p \cdot g^{\delta_1})))/g^{\delta_2}/q$;
		\item We define the grammar $\LPG^\prime(G)$ as $\langle \Sigma, f/s, \triangleright^\prime \rangle$ where $a_i \triangleright^\prime B$ if and only if $B = f/(f/(s \cdot A))/s$ for $A = g_i \cdot T^\prime(\varphi_1)\cdot \dotsc \cdot T^\prime(\varphi_j)$ where $\varphi_l \in \mathcal{P}$ are some axioms and rules and $j \le C$.
	\end{enumerate}
	The proof of the fact that $L(G) = \pi(L(\LPG^\prime(G)))$ is similar to the proof of Theorem \ref{th_bvass_to_lp}. 
	
	Note that $\LP \vdash A/(B\cdot C) \to A/B/C$ and $\LP \vdash A/B/C \to A/(B\cdot C)$. Consequently, in the above grammar, we can replace each type with the product under the division by a type without the product; i.e. we replace $(f/(f/(p \cdot g^{\delta_1})))/g^{\delta_2}/q$ by $(f/(f/p/g^{\delta_1}))/g^{\delta_2}/q$ and $f/(f/(s \cdot A))/s$ by $f/(f/s / g_i / T^\prime(\varphi_1) / \dotsc / T^\prime(\varphi_j))/s$. Hence there exists an $\LP(/)$-grammar equivalent to $\LPG^\prime(G)$. Let us denote it as $\LPG^{\prime\prime}(G)$.
	
	Given an $\LP$-grammar $G$, the grammar $\LPG^{\prime\prime}(\lBAM(G))$ is equivalent to it, and it is an $\LP(/)$-grammar.
\end{proof}

The previous corollary might be formulated even in a stronger way: $\LP$-grammars are equivalent to $\LP(/)$-grammars that use only types of depth less than or equal to $4$. It would be interesting to answer the question if depth can be decreased without losing expressive power or not.

\section{Intersection of languages generated by $\LP$-grammars}\label{sec_intersection}
In this section, we concern with a result concerning languages generated by $\LP$-grammars, which is unrelated to the results from the previous section. It is as follows:

\begin{theorem}
	The class of languages generated by $\LP$-grammars is closed under intersection.
\end{theorem}
The proof of this theorem is similar to that from \cite{Kanazawa92} for grammars over the multiplicative-additive Lambek calculus, and it is quite simple (although we did not find this result in the existing literature). The main idea is that we can use multiplicative conjunction instead of the additive one in the commutative case.
\begin{proof}
	Let $G_i = \langle \Sigma, S_i, \triangleright_i \rangle$ ($i=1,2$) be two $\LP$-grammars; our goal is to find a grammar $G = \langle \Sigma, S, \triangleright \rangle$ such that $L(G) = L(G_1)\cap L(G_2)$. We can assume without loss of generality that $STp(G_1)\cap STp(G_2) = \emptyset$, or, equivalently, that primitive subtypes of types from $G_1$ and $G_2$ are pairwise disjoint. Having this in mind, we define $G$ as follows:
	\begin{itemize}
		\item $S \eqdef S_1\cdot S_2$;
		\item $a \triangleright T$ if and only if $T$ is of the form $T_1\cdot T_2$ where $a \triangleright_i T_i$ ($i=1,2$).
	\end{itemize}
	In the new grammar, $a_1,\dotsc, a_n$ belongs to $L(G)$ if and only if there exist types $T^i_j$ for $i=1,2$, $j=1,\dotsc,n$ such that $a_j \triangleright_i T^i_j$ and $$\LP\vdash T^1_1\cdot T^2_1, \dotsc,T^1_n \cdot T^2_n \to S_1 \cdot S_2.$$ The latter is equivalent to derivability of the sequent $T^1_1, \dotsc, T^1_n, T^2_1, \dotsc, T^2_n \to S_1 \cdot S_2$. 
	
	\begin{lemma}
		\leavevmode
		\begin{enumerate}
			\item Let $\LP \vdash A_1,\dotsc,A_n \to B$ where $A_i$ are from $STp(G)$ and where $B$ is from $STp(G_k)$ ($k\in \{1,2\}$). Then all $A_i$ are also from $STp(G_k)$.
			\item Let $\LP \vdash A_1,\dotsc,A_n,B_1,\dotsc,B_m \to A \cdot B$ where $A_i$ and $A$ are from $STp(G_1)$, and $B_i$ and $B$ are from $STp(G_2)$. Then $\LP \vdash A_1,\dotsc,A_n \to A$ and $\LP \vdash B_1,\dotsc,B_m \to B$.
		\end{enumerate}
	\end{lemma}
	Both statements are proved by a straightforward induction on the length of a derivation. Consequently, $T^1_1, \dotsc, T^1_n, T^2_1, \dotsc, T^2_n \to S_1 \cdot S_2$ is derivable if and only if $\LP \vdash T^1_1, \dotsc, T^1_n \to S_1$ and $\LP \vdash T^2_1, \dotsc, T^2_n \to S_2$. This completes the proof.
\end{proof}

Interestingly, both the class of languages generated by $\LP$-grammars and the class of permutation closures of context-free languages turn out to be closed under intersection; indeed, each permutation closure of a context-free language equals $\pi^{-1}(S)$ for some semilinear set $S$, and semilinear sets are closed under intersection \cite{GinsburgS66}. If this was not the case, we might have a simpler way of proving that $\LP$-grammars are not equivalent to permutation closures of context-free languages.

\section{Conclusion}\label{sec_conclusion}
We showed that $\LP$-grammars are not context-free in the sense that they generate more than permutation closures of context-free languages. This result contrasts with that for Lambek grammars, which are context free \cite{Pentus93}. We proved this by establishing the equivalence of $\LP$-grammars and lBVASSAM, which is yet another extension of vector addition systems. 

Several open questions related to the achieved results could be mentioned:
\begin{enumerate}
	\item Is the set of languages generated by $\LP$-grammars closed under complement? Note that semilinear sets are. If the answer to this question is negative, then this would give us another proof of the fact that languages of $\LP$-grammars are more than permutation closures of context-free languages.
	\item Can we generate a language like $\{a\times f(n) \mid n>0\}$ where $f(n)$ is some nonlinear function (e.g. $f(n)=n^2$) by an $\LP$-grammar?
	\item The linear restriction can be imposed on BVASS as well resulting in lBVASS. Then one might ask whether lBVASS are equivalent to lBVASSAM, i.e. whether additional memory is essential in lBVASSAM. Similarly, it would be interesting to answer the question if BVASS are equivalent to BVASSAM. 
	\item Generally, we are not completely aware that there is no other formalism existing in the literature, which would appear to be equivalent to (l)BVASSAM (that is, that we cannot rid of our own definitions).
\end{enumerate}

Although the methods used in this paper do not exploit high-level technical tricks but they are rather straightforward, we would like to emphasize the importance of the \emph{linear restriction}. It naturally arises for $\LP$-grammars, and, moreover, it proved to be useful as well for hypergraph Lambek grammars: namely, in \cite{Pshenitsyn22}, we prove that any DPO hypergraph grammar (DPO grammars extend unrestricted Chomsky grammars to hypergraphs) can be transformed into an equivalent hypergraph Lambek grammars. The main construction in that paper is based on the same idea as Construction \ref{construction_lbvassam_to_lp}. We assume that using this restriction for other kinds of formal grammars can also be successfully used for investigating expressive power of other kinds of categorial grammars.

\bmhead{Acknowledgments}
I thank Stepan L. Kuznetsov for bringing my attention to this problem and for suggesting valuable ideas to explore.

\section*{Declarations}

The study was supported by RFBR, project number 20-01-00670, by the Theoretical Physics and Mathematics Advancement Foundation ``BASIS'', and by the Interdisciplinary Scientific and Educational School of Moscow University ``Brain, Cognitive Systems, Artificial Intelligence''.

\begin{appendices}



\end{appendices}


\bibliography{LP-Grammars}


\begin{thebibliography}{16}
\ifx \bisbn   \undefined \def \bisbn  #1{ISBN #1}\fi
\ifx \binits  \undefined \def \binits#1{#1}\fi
\ifx \bauthor  \undefined \def \bauthor#1{#1}\fi
\ifx \batitle  \undefined \def \batitle#1{#1}\fi
\ifx \bjtitle  \undefined \def \bjtitle#1{#1}\fi
\ifx \bvolume  \undefined \def \bvolume#1{\textbf{#1}}\fi
\ifx \byear  \undefined \def \byear#1{#1}\fi
\ifx \bissue  \undefined \def \bissue#1{#1}\fi
\ifx \bfpage  \undefined \def \bfpage#1{#1}\fi
\ifx \blpage  \undefined \def \blpage #1{#1}\fi
\ifx \burl  \undefined \def \burl#1{\textsf{#1}}\fi
\ifx \doiurl  \undefined \def \doiurl#1{\url{https://doi.org/#1}}\fi
\ifx \betal  \undefined \def \betal{\textit{et al.}}\fi
\ifx \binstitute  \undefined \def \binstitute#1{#1}\fi
\ifx \binstitutionaled  \undefined \def \binstitutionaled#1{#1}\fi
\ifx \bctitle  \undefined \def \bctitle#1{#1}\fi
\ifx \beditor  \undefined \def \beditor#1{#1}\fi
\ifx \bpublisher  \undefined \def \bpublisher#1{#1}\fi
\ifx \bbtitle  \undefined \def \bbtitle#1{#1}\fi
\ifx \bedition  \undefined \def \bedition#1{#1}\fi
\ifx \bseriesno  \undefined \def \bseriesno#1{#1}\fi
\ifx \blocation  \undefined \def \blocation#1{#1}\fi
\ifx \bsertitle  \undefined \def \bsertitle#1{#1}\fi
\ifx \bsnm \undefined \def \bsnm#1{#1}\fi
\ifx \bsuffix \undefined \def \bsuffix#1{#1}\fi
\ifx \bparticle \undefined \def \bparticle#1{#1}\fi
\ifx \barticle \undefined \def \barticle#1{#1}\fi
\bibcommenthead
\ifx \bconfdate \undefined \def \bconfdate #1{#1}\fi
\ifx \botherref \undefined \def \botherref #1{#1}\fi
\ifx \url \undefined \def \url#1{\textsf{#1}}\fi
\ifx \bchapter \undefined \def \bchapter#1{#1}\fi
\ifx \bbook \undefined \def \bbook#1{#1}\fi
\ifx \bcomment \undefined \def \bcomment#1{#1}\fi
\ifx \oauthor \undefined \def \oauthor#1{#1}\fi
\ifx \citeauthoryear \undefined \def \citeauthoryear#1{#1}\fi
\ifx \endbibitem  \undefined \def \endbibitem {}\fi
\ifx \bconflocation  \undefined \def \bconflocation#1{#1}\fi
\ifx \arxivurl  \undefined \def \arxivurl#1{\textsf{#1}}\fi
\csname PreBibitemsHook\endcsname

\bibitem{Lambek58}
\begin{barticle}
\bauthor{\bsnm{Lambek}, \binits{J.}}:
\batitle{The mathematics of sentence structure}.
\bjtitle{The American Mathematical Monthly}
\bvolume{65}(\bissue{3}),
\bfpage{154}--\blpage{170}
(\byear{1958})
\end{barticle}
\endbibitem

\bibitem{Pentus93}
\begin{bchapter}
\bauthor{\bsnm{Pentus}, \binits{M.}}:
\bctitle{Lambek grammars are context free}.
In: \bbtitle{Proceedings of the Eighth Annual Symposium on Logic in Computer
  Science {(LICS})},
pp. \bfpage{429}--\blpage{433}.
\bpublisher{{IEEE} Computer Society},
\blocation{Montreal, Canada}
(\byear{1993}).
\doiurl{10.1109/LICS.1993.287565}
\end{bchapter}
\endbibitem

\bibitem{Bar-Hillel60}
\begin{barticle}
\bauthor{\bsnm{{Bar-Hillel, Y.}}},
\bauthor{\bsnm{{Gaifman, H.}}},
\bauthor{\bsnm{{Shamir, E.}}}:
\batitle{On categorial and phrase structure grammars}.
\bjtitle{Bull. Res. Counc. Israel}
\bvolume{9},
\bfpage{1}--\blpage{6}
(\byear{1960})
\end{barticle}
\endbibitem

\bibitem{Benthem83}
\begin{bbook}
\bauthor{\bparticle{van} \bsnm{Benthem}, \binits{J.}}:
\bbtitle{The Semantics of Variety in Categorial Grammar}.
\bsertitle{Linguistic and Literary Studies in Eastern Europe},
vol. \bseriesno{25},
pp. \bfpage{37}--\blpage{55}.
\bpublisher{John Benjamins},
\blocation{Amsterdam}
(\byear{1983})
\end{bbook}
\endbibitem

\bibitem{Benthem91}
\begin{barticle}
\bauthor{\bparticle{van} \bsnm{Benthem}, \binits{J.}}:
\batitle{Language in action}.
\bjtitle{J. Philos. Log.}
\bvolume{20}(\bissue{3}),
\bfpage{225}--\blpage{263}
(\byear{1991}).
\doiurl{10.1007/BF00250539}
\end{barticle}
\endbibitem

\bibitem{Kuznetsov_personal}
\begin{botherref}
\oauthor{\bsnm{Kuznetsov}, \binits{S.L.}}:
Personal communication
\end{botherref}
\endbibitem

\bibitem{Valentin12}
\begin{botherref}
\oauthor{\bsnm{Valent{\'{i}}n}, \binits{O.}}:
Theory of discontinuous {L}ambek calculus.
PhD thesis,
Universitat Autònoma de Barcelona. Departament de Filologia Catalana
(2012)
\end{botherref}
\endbibitem

\bibitem{Pshenitsyn22}
\begin{bchapter}
\bauthor{\bsnm{Pshenitsyn}, \binits{T.}}:
\bctitle{Transformation of DPO Grammars Into Hypergraph Lambek Grammars With
  The Conjunctive Kleene Star}.
In: \beditor{\bsnm{Grabmayer}, \binits{C.}} (ed.)
\bbtitle{Pre-proceedings of the 12th International Workshop on Computing with
  Terms and Graphs {(TERMGRAPH} 2022)},
\blocation{Haifa, Israel}
(\byear{2022})
\end{bchapter}
\endbibitem

\bibitem{VermaG05}
\begin{barticle}
\bauthor{\bsnm{Verma}, \binits{K.N.}},
\bauthor{\bsnm{Goubault{-}Larrecq}, \binits{J.}}:
\batitle{Karp-{M}iller trees for a branching extension of {VASS}}.
\bjtitle{Discret. Math. Theor. Comput. Sci.}
\bvolume{7}(\bissue{1}),
\bfpage{217}--\blpage{230}
(\byear{2005})
\end{barticle}
\endbibitem

\bibitem{KarpM69}
\begin{barticle}
\bauthor{\bsnm{Karp}, \binits{R.M.}},
\bauthor{\bsnm{Miller}, \binits{R.E.}}:
\batitle{Parallel program schemata}.
\bjtitle{Journal of Computer and System Sciences}
\bvolume{3}(\bissue{2}),
\bfpage{147}--\blpage{195}
(\byear{1969}).
\doiurl{10.1016/S0022-0000(69)80011-5}
\end{barticle}
\endbibitem

\bibitem{LincolnMSS92}
\begin{barticle}
\bauthor{\bsnm{Lincoln}, \binits{P.}},
\bauthor{\bsnm{Mitchell}, \binits{J.C.}},
\bauthor{\bsnm{Scedrov}, \binits{A.}},
\bauthor{\bsnm{Shankar}, \binits{N.}}:
\batitle{Decision problems for propositional linear logic}.
\bjtitle{Ann. Pure Appl. Log.}
\bvolume{56}(\bissue{1-3}),
\bfpage{239}--\blpage{311}
(\byear{1992}).
\doiurl{10.1016/0168-0072(92)90075-B}
\end{barticle}
\endbibitem

\bibitem{Kanovich95}
\begin{barticle}
\bauthor{\bsnm{Kanovich}, \binits{M.I.}}:
\batitle{Petri nets, horn programs, linear logic and vector games}.
\bjtitle{Ann. Pure Appl. Log.}
\bvolume{75}(\bissue{1-2}),
\bfpage{107}--\blpage{135}
(\byear{1995}).
\doiurl{10.1016/0168-0072(94)00060-G}
\end{barticle}
\endbibitem

\bibitem{Rambow94}
\begin{bchapter}
\bauthor{\bsnm{Rambow}, \binits{O.}}:
\bctitle{Multiset-valued linear index grammars: Imposing dominance constraints
  on derivations}.
In: \bbtitle{{ACL}},
pp. \bfpage{263}--\blpage{270}.
\bpublisher{Morgan Kaufmann Publishers / {ACL}},
\blocation{Las Cruces, New Mexico}
(\byear{1994}).
\doiurl{10.3115/981732.981768}
\end{bchapter}
\endbibitem

\bibitem{DomosiK99}
\begin{bchapter}
\bauthor{\bsnm{D{\"{o}}m{\"{o}}si}, \binits{P.}},
\bauthor{\bsnm{Kudlek}, \binits{M.}}:
\bctitle{Strong iteration lemmata for regular, linear, context-free, and linear
  indexed languages}.
In: \bbtitle{{FCT}}.
\bsertitle{Lecture Notes in Computer Science},
vol. \bseriesno{1684},
pp. \bfpage{226}--\blpage{233}.
\bpublisher{Springer},
\blocation{Iasi, Romania}
(\byear{1999})
\end{bchapter}
\endbibitem

\bibitem{Kanazawa92}
\begin{barticle}
\bauthor{\bsnm{Kanazawa}, \binits{M.}}:
\batitle{The {L}ambek calculus enriched with additional connectives}.
\bjtitle{J. Log. Lang. Inf.}
\bvolume{1}(\bissue{2}),
\bfpage{141}--\blpage{171}
(\byear{1992})
\end{barticle}
\endbibitem

\bibitem{GinsburgS66}
\begin{barticle}
\bauthor{\bsnm{Ginsburg}, \binits{S.}},
\bauthor{\bsnm{Spanier}, \binits{E.H.}}:
\batitle{{Semigroups, Presburger formulas, and languages.}}
\bjtitle{Pacific Journal of Mathematics}
\bvolume{16}(\bissue{2}),
\bfpage{285}--\blpage{296}
(\byear{1966}).
\doiurl{pjm/1102994974}
\end{barticle}
\endbibitem

\end{thebibliography}


\end{document}